\providecommand{\U}[1]{\protect\rule{.1in}{.1in}}
\newtheorem{theorem}{Theorem} [section]
\newtheorem{corollary}[theorem]{Corollary}
\newtheorem{lemma}[theorem]{Lemma}
\newtheorem{problem}[theorem]{Problem}
\newtheorem{proposition}[theorem]{Proposition}
\newenvironment{proof}[1][Proof]{\noindent\textbf{#1.} }{\ \rule{0.5em}{0.5em}}
\begin{document}

\title{On Symmetry of Independence Polynomials}
\author{Vadim E. Levit\\Ariel University Center of Samaria, Israel\\levitv@ariel.ac.il
\and Eugen Mandrescu\\Holon Institute of Technology, Israel\\eugen\_m@hit.ac.il}
\date{}
\maketitle

\begin{abstract}
An \textit{independent} set in a graph is a set of pairwise non-adjacent
vertices, and $\alpha(G)$ is the size of a maximum independent set in the
graph $G$. A matching is a set of non-incident edges, while $\mu(G)$ is the
cardinality of a maximum matching.

If $s_{k}$ is the number of independent sets of cardinality $k$ in $G$, then
\[
I(G;x)=s_{0}+s_{1}x+s_{2}x^{2}+...+s_{\alpha}x^{\alpha},\alpha=\alpha\left(
G\right)  ,
\]
is called the \textit{independence polynomial} of $G$ (Gutman and Harary
\cite{GuHa83}). If $s_{j}=s_{\alpha-j}$, $0\leq j\leq\left\lfloor
\alpha/2\right\rfloor $, then $I(G;x)$ is called \textit{symmetric} (or
\textit{palindromic}). It is known that the graph $G\circ2K_{1}$ obtained by
joining each vertex of $G$ to two new vertices, has a symmetric independence
polynomial \cite{St98}.

In this paper we show that for every graph $G$ and for each non-negative
integer $k\leq\mu\left(  G\right)  $, one can build a graph $H$, such that:
$G$ is a subgraph of $H$, $I\left(  H;x\right)  $ is symmetric, and $I\left(
G\circ2K_{1};x\right)  =\left(  1+x\right)  ^{k}\cdot I\left(  H;x\right)
$.\medskip

\textbf{Keywords:} independent set, independence polynomial, symmetric
polynomial, palindromic polynomial\medskip

\textbf{MSC Classification 2010:} 05C31; 05C69.

\end{abstract}

\section{Introduction}

Throughout this paper $G=(V,E)$ is a simple (i.e., a finite, undirected,
loopless and without multiple edges) graph with vertex set $V=V(G)$ and edge
set $E=E(G).$ If $X\subset V$, then $G[X]$ is the subgraph of $G$ spanned by
$X$. By $G-W$ we mean the subgraph $G[V-W]$, if $W\subset V(G)$. We also
denote by $G-F$ the partial subgraph of $G$ obtained by deleting the edges of
$F$, for $F\subset E(G)$, and we write shortly $G-e$, whenever $F$ $=\{e\}$.
The \textit{neighborhood} of a vertex $v\in V$ is the set $N_{G}(v)=\{w:w\in
V$ \textit{and }$vw\in E\}$, and $N_{G}[v]=N_{G}(v)\cup\{v\}$; if there is no
ambiguity on $G$, we write $N(v)$ and $N[v]$. $K_{n},P_{n},C_{n}$ denote,
respectively, the complete graph on $n\geq1$ vertices, the chordless path on
$n\geq1$ vertices, and the chordless cycle on $n\geq3$ vertices.

The \textit{disjoint union} of the graphs $G_{1},G_{2}$ is the graph
$G=G_{1}\cup G_{2}$ having as vertex set the disjoint union of $V(G_{1}%
),V(G_{2})$, and as edge set the disjoint union of $E(G_{1}),E(G_{2})$. In
particular, $nG$ denotes the disjoint union of $n>1$ copies of the graph $G$.

If $G_{1},G_{2}$ are disjoint graphs, $A_{1}\subseteq V(G_{1}),A_{2}\subseteq
V(G_{2})$, then the \textit{Zykov sum} of $G_{1},G_{2}$ with respect to
$A_{1},A_{2}$, is the graph $\left(  G_{1},A_{1}\right)  +(G_{2},A_{2})$ with
$V(G_{1})\cup V(G_{2})$ as vertex set and $E(G_{1})\cup E(G_{2})\cup
\{v_{1}v_{2}:v_{1}\in A_{1},v_{2}\in A_{2}\}$ as edge set. If $A_{1}=V(G_{1})$
and $A_{2}=V(G_{2})$, we simply write $G_{1}+G_{2}$.

The \textit{corona} of the graphs $G$ and $H$ with respect to $A\subseteq
V(G)$ is the graph $\left(  G,A\right)  \circ H$ obtained from $G$ and
$\left\vert A\right\vert $ copies of $H$, such that each vertex of $A$ is
joined to all vertices of a copy of $H$. If $A=$ $V(G)$ we use $G\circ H$
instead of $\left(  G,V(G)\right)  \circ H$ (see Figure \ref{fig123} for an
example). \begin{figure}[h]
\setlength{\unitlength}{1cm}\begin{picture}(5,1.2)\thicklines
\put(1,0){\circle*{0.29}}
\multiput(1,1)(1,0){4}{\circle*{0.29}}
\put(1,0){\line(1,1){1}}
\put(1,0){\line(0,1){1}}
\put(1,1){\line(1,0){3}}
\put(3,0.6){\makebox(0,0){$a$}}
\put(4,0.6){\makebox(0,0){$b$}}
\put(0.3,0.5){\makebox(0,0){$G$}}
\multiput(5,1)(1,0){2}{\circle*{0.29}}
\put(5,1){\line(1,0){1}}
\put(5.5,0.5){\makebox(0,0){$H$}}
\multiput(8,0)(1,0){5}{\circle*{0.29}}
\multiput(8,1)(1,0){4}{\circle*{0.29}}
\put(8,0){\line(0,1){1}}
\put(8,0){\line(1,1){1}}
\put(8,1){\line(1,0){3}}
\put(9,0){\line(1,1){1}}
\put(9,0){\line(1,0){1}}
\put(10,0){\line(0,1){1}}
\put(11,0){\line(1,0){1}}
\put(11,0){\line(0,1){1}}
\put(11,1){\line(1,-1){1}}
\put(10.25,0.7){\makebox(0,0){$a$}}
\put(11.35,1){\makebox(0,0){$b$}}
\put(7.3,0.5){\makebox(0,0){$L$}}
\end{picture}
\caption{$G,H$ and $L=\left(  G,A\right)  \circ H$, where $A=\{a,b\}$.}%
\label{fig123}%
\end{figure}
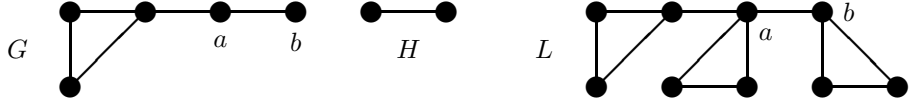

Let $G,H$ be two graphs and $C$ be a cycle on $q$ vertices of $G$. By
$(G,C)\bigtriangleup H$ we mean the graph obtained from $G$ and $q$ copies of
$H$, such that each two consecutive vertices on $C$ are joined to all vertices
of a copy of $H$ (see Figure \ref{fig333} for an example).

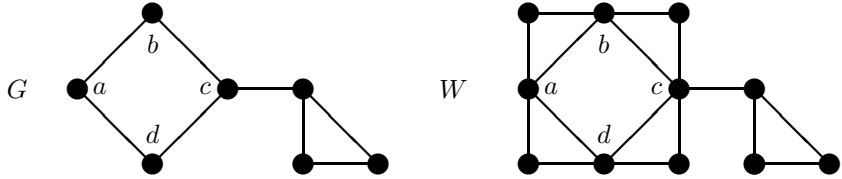
\begin{figure}[h]
\setlength{\unitlength}{1cm}\begin{picture}(5,2.1)\thicklines
\put(2,1){\circle*{0.29}}
\multiput(3,0)(0,2){2}{\circle*{0.29}}
\multiput(4,1)(1,0){2}{\circle*{0.29}}
\multiput(5,0)(1,0){2}{\circle*{0.29}}
\put(2,1){\line(1,1){1}}
\put(2,1){\line(1,-1){1}}
\put(3,0){\line(1,1){1}}
\put(3,2){\line(1,-1){1}}
\put(4,1){\line(1,0){1}}
\put(5,0){\line(1,0){1}}
\put(5,0){\line(0,1){1}}
\put(5,1){\line(1,-1){1}}
\put(2.3,1){\makebox(0,0){$a$}}
\put(3,1.6){\makebox(0,0){$b$}}
\put(3.7,1){\makebox(0,0){$c$}}
\put(3,0.4){\makebox(0,0){$d$}}
\put(1.2,1){\makebox(0,0){$G$}}
\multiput(8,0)(0,1){3}{\circle*{0.29}}
\multiput(9,0)(0,2){2}{\circle*{0.29}}
\multiput(10,0)(0,1){3}{\circle*{0.29}}
\multiput(11,0)(0,1){2}{\circle*{0.29}}
\multiput(11,0)(1,0){2}{\circle*{0.29}}
\put(8,0){\line(0,1){2}}
\put(8,0){\line(1,0){2}}
\put(8,1){\line(1,1){1}}
\put(8,1){\line(1,-1){1}}
\put(8,2){\line(1,0){2}}
\put(9,0){\line(1,1){1}}
\put(9,2){\line(1,-1){1}}
\put(10,0){\line(0,1){2}}
\put(10,1){\line(1,0){1}}
\put(11,0){\line(1,0){1}}
\put(11,0){\line(0,1){1}}
\put(11,1){\line(1,-1){1}}
\put(8.3,1){\makebox(0,0){$a$}}
\put(9,1.6){\makebox(0,0){$b$}}
\put(9.7,1){\makebox(0,0){$c$}}
\put(9,0.4){\makebox(0,0){$d$}}
\put(7,1){\makebox(0,0){$W$}}
\end{picture}
\caption{$G$ and $W=(G,C)\bigtriangleup H$, where $V(C)=\{a,b,c,d\}$ and
$H=K_{1}$.}%
\label{fig333}%
\end{figure}

An \textit{independent} (or a \textit{stable})\textit{\ }set in $G$ is a set
of pairwise non-adjacent vertices. By \textrm{Ind}$(G)$ we mean the family of
all independent sets of $G$. An independent set of maximum size will be
referred to as a \textit{maximum independent set} of $G$, and the
\textit{independence number }of $G$, denoted by $\alpha(G)$, is the
cardinality of a maximum independent set in $G$.

Let $s_{k}$ be the number of independent sets of size $k$ in a graph $G$. The
polynomial
\[
I(G;x)=s_{0}+s_{1}x+s_{2}x^{2}+...+\text{ }s_{\alpha}x^{\alpha},\quad
\alpha=\alpha\left(  G\right)  ,
\]
is called the \textit{independence polynomial} of $G$ \cite{GuHa83}, the
\textit{independent set polynomial} of $G$ \cite{HoLi94}. In \cite{FiSo90},
the \textit{dependence polynomial} $D(G;x)$ of a graph $G$ is defined as
$D(G;x)=I(\overline{G};-x)$. 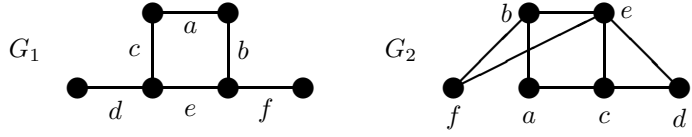
\begin{figure}[h]
\setlength{\unitlength}{1cm}\begin{picture}(5,1.8)\thicklines
\multiput(3,0.5)(1,0){4}{\circle*{0.29}}
\multiput(4,1.5)(1,0){2}{\circle*{0.29}}
\put(3,0.5){\line(1,0){3}}
\put(4,0.5){\line(0,1){1}}
\put(4,1.5){\line(1,0){1}}
\put(5,0.5){\line(0,1){1}}
\put(3.75,1){\makebox(0,0){$c$}}
\put(5.2,1){\makebox(0,0){$b$}}
\put(3.5,0.2){\makebox(0,0){$d$}}
\put(4.5,1.3){\makebox(0,0){$a$}}
\put(5.5,0.2){\makebox(0,0){$f$}}
\put(4.5,0.2){\makebox(0,0){$e$}}
\put(2.3,1){\makebox(0,0){$G_{1}$}}
\multiput(8,0.5)(1,0){4}{\circle*{0.29}}
\multiput(9,1.5)(1,0){2}{\circle*{0.29}}
\put(8,0.5){\line(1,1){1}}
\put(8,0.5){\line(2,1){2}}
\put(9,0.5){\line(1,0){2}}
\put(9,0.5){\line(0,1){1}}
\put(9,1.5){\line(1,0){1}}
\put(10,0.5){\line(0,1){1}}
\put(10,1.5){\line(1,-1){1}}
\put(8,0.1){\makebox(0,0){$f$}}
\put(8.7,1.5){\makebox(0,0){$b$}}
\put(10,0.1){\makebox(0,0){$c$}}
\put(11,0.1){\makebox(0,0){$d$}}
\put(9,0.1){\makebox(0,0){$a$}}
\put(10.3,1.5){\makebox(0,0){$e$}}
\put(7.3,1){\makebox(0,0){$G_{2}$}}
\end{picture}
\caption{$G_{2}$ is the line-graph of and $G_{1}$.}%
\label{fig1}%
\end{figure}

A matching is a set of non-incident edges of a graph $G$, while $\mu(G)$ is
the cardinality of a maximum matching. Let $m_{k}$ be the number of matchings
of size $k$ in $G$. The polynomial
\[
M(G;x)=m_{0}+m_{1}x+m_{2}x^{2}+...+\text{ }m_{\mu}x^{\mu},\quad\mu=\mu\left(
G\right)  ,
\]
is called the \textit{matching polynomial} of $G$ \cite{Farel79}.

The independence polynomial has been defined as a generalization of the
matching polynomial, because the matching polynomial of a graph $G$ and the
independence polynomial of its line graph are identical. Recall that given a
graph $G$, its \textit{line graph} $L(G)$ is the graph whose vertex set is the
edge set of $G$, and two vertices are adjacent if they share an end in $G$.
For instance, the graphs $G_{1}$ and $G_{2}$ depicted in Figure \ref{fig1}
satisfy $G_{2}=L(G_{1})$ and, hence, $I(G_{2};x)=1+6x+7x^{2}+x^{3}=M(G_{1};x)$.

In \cite{GuHa83} a\emph{\ }number of general properties of the independence
polynomial of a graph are presented. As examples, we mention that:%
\begin{align*}
I(G_{1}\cup G_{2};x)  &  =I(G_{1};x)\cdot I(G_{2};x),\\
I(G_{1}+G_{2};x)  &  =I(G_{1};x)+I(G_{2};x)-1.
\end{align*}
The following equalities are very useful in calculating of the independence
polynomial for various families of graphs.

\begin{theorem}
\label{th3}\emph{(i)} \cite{GuHa83} $I(G;x)=I(G-v;x)+x\cdot I(G-N[v];x)$ holds
for every $v\in V(G)$.

\emph{(ii)} \cite{Gu92d} $I\left(  G\circ H;x\right)  =\left(  I\left(
H;x\right)  \right)  ^{n}\bullet I\left(  G;\frac{x}{I\left(  H;x\right)
}\right)  $, where $n=\left\vert V\left(  G\right)  \right\vert $.
\end{theorem}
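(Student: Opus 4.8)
The plan is to prove both identities combinatorially, reading $I(G;x)$ as the generating function $\sum_{S\in\mathrm{Ind}(G)}x^{|S|}$ and splitting $\mathrm{Ind}(G)$ into blocks that are in cardinality-respecting bijection with the independent-set families of smaller graphs.

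For part (i), I would fix $v\in V(G)$ and partition $\mathrm{Ind}(G)$ according to whether or not $v\in S$. The sets avoiding $v$ are exactly the independent sets of $G-v$, contributing $I(G-v;x)$. For a set $S$ with $v\in S$, independence forces $S$ to miss every neighbor of $v$, so $S\setminus\{v\}$ is an independent set of $G-N[v]$; and $T\mapsto T\cup\{v\}$ is the inverse map, from $\mathrm{Ind}(G-N[v])$ onto the independent sets of $G$ through $v$. Since this bijection raises cardinality by one, that block contributes $x\cdot I(G-N[v];x)$, and adding the two contributions gives (i).

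For part (ii), I would first record the structure of $G\circ H$: writing $V(G)=\{v_{1},\dots,v_{n}\}$ and letting $H_{1},\dots,H_{n}$ be the attached copies of $H$ (with $H_{i}$ completely joined to $v_{i}$), the vertex set is the disjoint union $V(G)\sqcup V(H_{1})\sqcup\cdots\sqcup V(H_{n})$. Given $S\in\mathrm{Ind}(G\circ H)$, set $A=S\cap V(G)\in\mathrm{Ind}(G)$; then each trace $S\cap V(H_{i})$ is an independent set of $H_{i}$, subject only to being empty whenever $v_{i}\in A$ (every vertex of $H_{i}$ is adjacent to $v_{i}$). Conversely, a choice of $A\in\mathrm{Ind}(G)$ together with an arbitrary independent set in each $H_{i}$ with $v_{i}\notin A$ reassembles into an independent set of $G\circ H$, because the copies $H_{i}$ are pairwise non-adjacent and interact only through $G$. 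Counting by cardinality, a fixed $A$ contributes $x^{|A|}\bigl(I(H;x)\bigr)^{n-|A|}$, whence
\[
I(G\circ H;x)=\sum_{A\in\mathrm{Ind}(G)}x^{|A|}\bigl(I(H;x)\bigr)^{n-|A|}=\bigl(I(H;x)\bigr)^{n}\!\!\sum_{A\in\mathrm{Ind}(G)}\Bigl(\tfrac{x}{I(H;x)}\Bigr)^{|A|},
\]
and the last sum is $I\!\left(G;\tfrac{x}{I(H;x)}\right)$ by definition, giving (ii).

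The only step needing genuine care is the decomposition in (ii): one must check that distinct copies $H_{i},H_{j}$ are non-adjacent so that the choices inside different copies really are independent, which is exactly what licenses the factor $\bigl(I(H;x)\bigr)^{n-|A|}$ for a fixed $A$. I would also note that the displayed identity is an equality of rational functions in $x$ that becomes an honest polynomial identity because the factor $\bigl(I(H;x)\bigr)^{n}$ clears every denominator created by the substitution $x\mapsto x/I(H;x)$, so no formal-substitution subtlety arises. Part (i) is entirely routine once the partition is in place.
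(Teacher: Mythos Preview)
Your argument for both parts is correct and is the standard combinatorial proof of these identities. Note, however, that the paper does not supply its own proof of this theorem: it is quoted with attribution to \cite{GuHa83} and \cite{Gu92d} as a known tool, so there is no in-paper argument to compare against. Your write-up would serve perfectly well as a self-contained justification had one been required.
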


A finite sequence of real numbers $(a_{0},a_{1},a_{2},...,a_{n})$ is said to be:

\begin{itemize}
\item \textit{unimodal} if there is some $k\in\{0,1,...,n\}$, such that
\[
a_{0}\leq...\leq a_{k-1}\leq a_{k}\geq a_{k+1}\geq...\geq a_{n};
\]

\item \textit{log-concave} if $a_{i}^{2}\geq a_{i-1}\cdot a_{i+1}$ for
$i\in\{1,2,...,n-1\}$.

\item \textit{symmetric} (or \textit{palindromic}) if $a_{i}=a_{n-i}%
,i=0,1,...,\left\lfloor n/2\right\rfloor $.
\end{itemize}

It is known that every log-concave sequence of positive numbers is also unimodal.

A polynomial is called \textit{unimodal (log-concave, symmetric)} if the
sequence of its coefficients is unimodal (log-concave, symmetric,
respectively). For instance, the independence polynomial

\begin{itemize}
\item $I(K_{42}+3K_{7};x)=1+63x+147x^{2}+343x^{3}$ is log-concave;

\item $I(K_{43}+3K_{7};x)=1+64x+147x^{2}+343x^{3}$ is unimodal, but
non-log-concave, because $147\cdot147-64\cdot343=-343<0$;

\item $I(K_{127}+3K_{7};x)=1+148x+\mathbf{147}x^{2}+343x^{3}$ is non-unimodal;

\item $I(K_{18}+3K_{3}+4K_{1};x)=1+\allowbreak31x+33x^{2}+31x^{3}+x^{4}$ is
symmetric and log-concave;

\item $I(K_{52}+3K_{4}+4K_{1};x)=1+68x+\mathbf{54}x^{2}+68x^{3}+x^{4}$ is
symmetric and non-unimodal.
\end{itemize}

It is easy to see that if $\alpha(G)\leq3$ and $I(G;x)$ is symmetric, then it
is also log-concave.

For other examples, see \cite{AlMaScEr87}, \cite{LeMa03b}, \cite{LeMa03c},
\cite{LeMa04c} and \cite{LeMa04b}. Alavi, Malde, Schwenk and Erd\"{o}s proved
that for any permutation $\pi$ of $\{1,2,...,\alpha\}$ there is a graph $G$
with $\alpha(G)=\alpha$ such that $s_{\pi(1)}<s_{\pi(2)}<...<s_{\pi(\alpha)}$
\cite{AlMaScEr87}.

In this paper we show that every graph $H$ derived from the graph $G$ by
Stevanovi\'{c}'s rules \cite{St98} gives rise to the following decomposition
\[
I\left(  G\circ2K_{1};x\right)  =\left(  1+x\right)  ^{k}\cdot I\left(
H;x\right)  ,
\]
for every non-negative integer $k\leq\mu\left(  G\right)  $.

\section{Preliminaries}

The symmetry of the matching polynomial and the characteristic polynomial of a
graph were examined in \cite{Kennedy}, while for the independence polynomial
we quote \cite{Gu93}, \cite{St98}, and \cite{BS2010}. Recall from
\cite{Kennedy} that $G$ is called a \textit{equible graph }if $G=H\circ K_{1}$
for some graph $H$. Both matching polynomials and characteristic polynomials
of equible graphs are symmetric \cite{Kennedy}. Nevertheless, there are
non-equible graphs whose matching polynomials and characteristic polynomials
are symmetric.

It is worth mentioning that one can produce graphs with symmetric independence
polynomials in different ways. We summarize some of them in the sequel.

\subsection{\textbf{Gutman's construction} \cite{Gu92c}}

For integers $p>1$, $q>1$, let $J_{p,q}$ be the graph built in the following
manner \cite{Gu92c}. Start with three complete graphs $K_{1}$, $K_{p}$ and
$K_{q}$ whose vertex sets are disjoint. Connect the vertex of $K_{1}$ with
$p-1$ vertices of $K_{p}$ and with $q-1$ vertices of $K_{q}$.
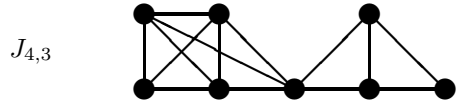
\begin{figure}[h]
\setlength{\unitlength}{1cm}\begin{picture}(5,1.3)\thicklines
\multiput(5,0)(1,0){5}{\circle*{0.29}}
\multiput(5,1)(1,0){2}{\circle*{0.29}}
\put(8,1){\circle*{0.29}}
\put(5,0){\line(1,0){4}}
\put(5,0){\line(0,1){1}}
\put(5,0){\line(1,1){1}}
\put(5,1){\line(1,0){1}}
\put(5,1){\line(1,0){1}}
\put(5,1){\line(1,-1){1}}
\put(6,0){\line(0,1){1}}
\put(7,0){\line(-1,1){1}}
\put(7,0){\line(-2,1){2}}
\put(7,0){\line(1,1){1}}
\put(8,0){\line(0,1){1}}
\put(8,1){\line(1,-1){1}}
\put(3.5,0.5){\makebox(0,0){$J_{4,3}$}}
\end{picture}
\caption{$I\left(  J_{4,3};x\right)  =1+8x+14x^{2}+x^{3}$ and $I\left(
J_{4,3}+K_{6};x\right)  =1+14x+14x^{2}+x^{3}$.}%
\end{figure}The graph thus obtained has a unique maximum independent set of
size three, and its independence polynomial is equal to
\[
I\left(  J_{p,q};x\right)  =1+(p+q+1)x+(pq+2)x^{2}+x^{3}.
\]
Hence the independence polynomial of $G=J_{p,q}+K_{pq-p-q+1}$ is
\[
I\left(  G;x\right)  =I\left(  J_{p,q};x\right)  +I\left(  K_{pq-p-q+1}%
;x\right)  -1=1+\left(  2+pq\right)  x+\left(  2+pq\right)  x^{2}+x^{3},
\]
which is clearly symmetric and log-concave.

\subsection{\textbf{Bahls and Salazar's construction} \cite{BS2010}}

The $K_{t}$-path of length $k\geq1$ is the graph $P(t,k)=(V,E)$ with
$V=\{v_{1},v_{2},...,v_{t+k-1}\}$ and $E=\left\{  v_{i}v_{i+j}:1\leq i\leq
t+k-2,1\leq j\leq\min\{t-1,t+k-i-1\}\right\}  $. Such a graph consists of $k $
copies of $K_{t}$, each glued to the previous one by identifying certain
prescribed subgraphs isomorphic to $K_{t-1}$. Let $d\geq0$ be an integer. The
$d$-augmented $K_{t}$ path $P(t,k,d)$ is defined by introducing new vertices
$\{u_{_{i},1},u_{_{i},2},...,u_{i,d}\}_{i=0}^{t+k-2}$ and edges $\left\{
v_{i}u_{i,j},v_{i+1}u_{i,j}:j=1,...,d\right\}  _{i=1}^{t+k-2}\cup\left\{
v_{1},u_{0,j}:j=1,...,d\right\}  $. Let $G=(V,E)$ and $U\subseteq V$ be a
subset of its vertices. Let $v\notin V$ and define the \textit{cone} of $G$ on
$U$ with vertex $v$, denoted $G^{\ast}(U,v)=\left(  G,U\right)  +K_{1}$, where
$K_{1}=\left(  \left\{  v\right\}  ,\emptyset\right)  $. Given $G$ and $U$ and
a graph $H$, we write $H+\left(  G,U\right)  $ instead of $\left(  H,V\left(
H\right)  \right)  +\left(  G,U\right)  $.

\begin{theorem}
\cite{BS2010} Let $t\geq2,k\geq1$, and $d\geq0$ be integers, and let $G=(V,E)$
be a graph with $U\subseteq V$ a distinguished subset of vertices. Suppose
that each of the graphs $G$, $G-U$, and $\left(  G,U\right)  +K_{1}$ have
symmetric and unimodal independence polynomials, and that $\deg(I(G;x))=\deg
(I(\left(  G,U\right)  +K_{1};x))=\deg(I(G-U;x))+2$. Then the independence
polynomial of the graph $P(t,k,d)+(G,U)$ is symmetric and unimodal.
\end{theorem}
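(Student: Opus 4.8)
The plan is to compute $I\big(P(t,k,d)+(G,U);x\big)$ explicitly in terms of $I(G;x)$ and $I(G-U;x)$, and then to read off symmetry and unimodality from closure properties of the class of unimodal polynomials whose nonnegative coefficient sequence is symmetric about a prescribed centre. For the first step I would use the structure of the construction — each vertex of the augmented path $P(t,k,d)$ is joined to the copy of $U$ sitting inside its own attached copy of $G$ — and organise the count of independent sets with Theorem~\ref{th3}(i), peeling off one $K_t$ of the path at a time, a computation parallel to the one behind Theorem~\ref{th3}(ii). It should give
\begin{align*}
I\big(P(t,k,d)+(G,U);x\big)&=I(G;x)^{N}\cdot I\!\left(P(t,k,d);\ \frac{x\,I(G-U;x)}{I(G;x)}\right)\\
&=\sum_{j\ge 0}\lambda_{j}\,\big(x\,I(G-U;x)\big)^{j}\,I(G;x)^{N-j},
\end{align*}
where $N=|V(P(t,k,d))|$ and $I(P(t,k,d);y)=\sum_{j}\lambda_{j}y^{j}$ has nonnegative integer coefficients.

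The crux is then the remark that the two ``atoms'' $I(G;x)$ and $x\,I(G-U;x)$ occurring in this sum are symmetric about the \emph{same} point. Put $\alpha=\deg I(G;x)$. By hypothesis $I(G;x)$ is symmetric and unimodal of degree $\alpha$, so its coefficient sequence is palindromic about $\alpha/2$, unimodal and nonnegative; and $I(G-U;x)$ is symmetric and unimodal of degree $\alpha-2$, hence palindromic about $(\alpha-2)/2$, so multiplication by $x$ moves the centre to $\alpha/2$ and makes $x\,I(G-U;x)$ palindromic about $\alpha/2$, unimodal and nonnegative as well. (Consistently, the third hypothesis merely records that the sum $I(G;x)+x\,I(G-U;x)=I\big((G,U)+K_{1};x\big)$ of the two atoms is again symmetric and unimodal.) The place where the degree condition of the theorem is really used is exactly here: it is the equality $\deg I(G-U;x)=\deg I(G;x)-2$ that forces these two centres to coincide.

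I would then invoke the standard closure facts: polynomials with nonnegative, unimodal coefficient sequence symmetric about a fixed centre $c$ are closed under addition; their products are again of this form, with the centres adding; and multiplication by $x$ shifts the centre by $1$. The centre bookkeeping and the preservation of symmetry are routine, while the one non-trivial ingredient — that such a product stays unimodal — follows by writing each factor as a nonnegative combination of the ``centred interval'' polynomials $x^{r}+x^{r+1}+\dots+x^{2c-r}$ and checking that the product of two of those is a (shift of a) symmetric trapezoidal, hence unimodal, sequence. Granting this, any product of $N$ copies of the atoms $I(G;x)$ and $x\,I(G-U;x)$, in arbitrary multiplicities, is unimodal with nonnegative coefficients symmetric about $N\alpha/2$; hence so is every summand $\lambda_{j}\,\big(x\,I(G-U;x)\big)^{j}I(G;x)^{N-j}$ of the displayed identity, and therefore so is their sum $I\big(P(t,k,d)+(G,U);x\big)$, as claimed.

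The step I expect to be the genuine obstacle is the ``product preserves unimodality'' fact used above: once the centres have been aligned in the second step, symmetry of the final polynomial is essentially forced, so nearly all of the difficulty is concentrated in this classical-but-not-one-line property of unimodal sequences, via the centred-interval decomposition. The remaining work — making the identity of the first step fully rigorous, including the bookkeeping attached to the $d$ augmenting vertices of $P(t,k,d)$ — is routine, best handled by induction on $k$.
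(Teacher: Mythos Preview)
First, a framing remark: the paper does not prove this theorem. It is quoted from \cite{BS2010} in the preliminaries with no argument supplied, so there is no in-paper proof against which to compare your approach.

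That aside, your proposal rests on a misreading of the construction. In this paper's notation $H+(G,U)$ is shorthand for the Zykov sum $(H,V(H))+(G,U)$: one takes a \emph{single} copy of $H$ and a \emph{single} copy of $G$ and joins every vertex of $H$ to every vertex of $U\subseteq V(G)$. You have instead interpreted $P(t,k,d)+(G,U)$ as a corona-type graph in which each vertex of the augmented path carries its own private copy of $G$; your displayed identity
\[
I\bigl(P(t,k,d)+(G,U);x\bigr)=I(G;x)^{N}\cdot I\!\left(P(t,k,d);\ \frac{x\,I(G-U;x)}{I(G;x)}\right)
\]
is the correct formula for that corona, but not for the Zykov sum. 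Already the degrees disagree: your right-hand side has degree $N\alpha$, while the actual graph has only $N+|V(G)|$ vertices. For the genuine construction, splitting on whether an independent set meets $V(P(t,k,d))$ gives instead
\[
I\bigl(P(t,k,d)+(G,U);x\bigr)=I(G;x)+\bigl(I(P(t,k,d);x)-1\bigr)\cdot I(G-U;x),
\]
and the two formulas coincide only in the trivial case $N=1$. Your closure toolkit for symmetric unimodal polynomials with aligned centres is perfectly sound, but fed this correct decomposition it no longer closes the argument automatically: the centre of $I(G;x)$ is $\alpha/2$, whereas that of $\bigl(I(P(t,k,d);x)-1\bigr)\cdot I(G-U;x)$ is $(\alpha+\beta-2)/2$ with $\beta=\alpha(P(t,k,d))$, and these do not match once $\beta>2$. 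The proof in \cite{BS2010} therefore needs, and develops, specific structural information about $I(P(t,k,d);x)$ itself via an induction on $k$; that analysis is not bypassed by your argument as written.
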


\subsection{\textbf{Stevanovi\'{c}'s constructions} \cite{St98}}

Taking into account that $s_{0}=1$ and $s_{1}=\left\vert V(G)\right\vert =n$,
it follows that if $I(G;x)$ is symmetric, then $s_{0}=s_{\alpha}$ and
$s_{1}=s_{\alpha-1}$, i.e., $G$ has only one maximum independent set, say $S$,
and $n-\alpha(G)$ independent sets, of size $\alpha(G)-1$, that are not
subsets of $S$.

\begin{theorem}
\label{th1}\cite{St98} If there is an independent set $S$ in $G$ such that
$\left\vert N(A)\cap S\right\vert =2\left\vert A\right\vert $ holds for every
independent set $A\subseteq V\left(  G\right)  -S$, then $I(G;x)$ is symmetric.
\end{theorem}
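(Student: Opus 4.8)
The plan is to establish the symmetry of $I(G;x)$ by comparing coefficients $s_j$ with $s_{\alpha-j}$ via an explicit bijection between independent sets, using the distinguished set $S$ as a ``reference frame.'' First I would record the structural consequences of the hypothesis $\left|N(A)\cap S\right|=2\left|A\right|$: taking $A=\emptyset$ gives nothing, but taking $A=\{v\}$ for a single vertex $v\in V(G)-S$ shows every such $v$ has exactly two neighbors in $S$; and since the condition holds for every independent $A\subseteq V(G)-S$, the map sending $A$ to its neighborhood inside $S$ behaves like a ``doubling'' set system. In particular $S$ is a maximum independent set: any independent set $W$ decomposes as $W=(W\cap S)\cup A$ with $A=W-S$ independent in $V(G)-S$, and $\left|W\right|=\left|W\cap S\right|+\left|A\right| \le \left|W\cap S\right| + \left|N(A)\cap S\right| - \left|A\right|$—wait, more carefully, $W\cap S$ must avoid $N(A)\cap S$, so $\left|W\cap S\right|\le\left|S\right|-2\left|A\right|$, giving $\left|W\right|\le\left|S\right|-\left|A\right|\le\left|S\right|$, hence $\alpha(G)=\left|S\right|$.

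Next I would set $\alpha=\left|S\right|$ and construct, for each $j$, an injection from independent sets of size $j$ to independent sets of size $\alpha-j$, together with its inverse, to conclude $s_j=s_{\alpha-j}$. The natural candidate: given an independent set $W$ of size $j$, write $A=W-S$ and $B=W\cap S$, so $\left|A\right|+\left|B\right|=j$ and $B\subseteq S-N(A)$, with $\left|S-N(A)\right|=\alpha-2\left|A\right|$. Map $W$ to $W'=A\cup(S-N(A)-B)$; then $\left|W'\right| = \left|A\right| + (\alpha-2\left|A\right|-\left|B\right|) = \alpha - \left|A\right| - \left|B\right| = \alpha - j$, and $W'$ is independent because $A$ is independent, $S-N(A)-B\subseteq S$ is independent, and no vertex of $A$ is adjacent to any vertex of $S-N(A)$ by definition of $N(A)$. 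The key point is that this map is an involution up to the bookkeeping $(A,B)\mapsto(A, S-N(A)-B)$: applying it twice returns $A\cup(S-N(A)-(S-N(A)-B)) = A\cup B = W$. So $W\mapsto W'$ is a bijection from $\mathrm{Ind}(G)$ to itself sending size-$j$ sets to size-$(\alpha-j)$ sets, proving $s_j=s_{\alpha-j}$ for all $j$, i.e., $I(G;x)$ is symmetric.

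I expect the main obstacle to be verifying that the map is well-defined and bijective without ever needing more than the single-vertex instances of the hypothesis—one must check that $S-N(A)$ really has size $\alpha-2\left|A\right|$, which is exactly where $\left|N(A)\cap S\right|=2\left|A\right|$ is used, and that independence of $W'$ holds, which only uses independence of $A$ and the definition of $N(A)$, not the counting hypothesis. A subtlety worth addressing explicitly: one should confirm that $A=W'-S$ and $B'=W'\cap S = S-N(A)-B$ genuinely recover the same $A$ when the map is reapplied, i.e., that $W'-S = A$; this is clear since $W'\cap S = S-N(A)-B$ and $W'-S=A$. Thus the involution is clean, and the symmetry of $I(G;x)$ follows immediately by matching the coefficient of $x^j$ with the coefficient of $x^{\alpha-j}$. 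If one prefers a generating-function phrasing, the same decomposition $W=A\sqcup B$ yields $I(G;x)=\sum_{A}x^{\left|A\right|}(1+x)^{\alpha-2\left|A\right|}$, where the sum is over independent sets $A\subseteq V(G)-S$, and each term $x^{\left|A\right|}(1+x)^{\alpha-2\left|A\right|}$ is palindromic of degree $\alpha$; a sum of palindromic polynomials all of the same degree is palindromic, finishing the proof.
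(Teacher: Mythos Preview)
The paper does not actually prove this theorem: it is quoted from Stevanovi\'{c}~\cite{St98} as background (Theorem~\ref{th1}) and immediately followed by its corollary, with no argument supplied. So there is no in-paper proof to compare your proposal against.

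That said, your argument is correct and is in fact the standard one. The decomposition $W=A\cup B$ with $A=W-S$ and $B=W\cap S\subseteq S\setminus N(A)$, together with the involution $B\mapsto (S\setminus N(A))\setminus B$, is exactly how the result is obtained; the hypothesis $\left|N(A)\cap S\right|=2\left|A\right|$ is used precisely to make $\left|S\setminus N(A)\right|=\alpha-2\left|A\right|$, so that the involution swaps sizes $j$ and $\alpha-j$. Your generating-function reformulation
\[
I(G;x)=\sum_{A}x^{\left|A\right|}(1+x)^{\alpha-2\left|A\right|}
\]
is also valid; one small wording issue is that each summand has \emph{degree} $\alpha-\left|A\right|$, not $\alpha$, but it is self-reciprocal with respect to degree $\alpha$ (i.e., $x^{\alpha}P(1/x)=P(x)$), which is the property you need, and the conclusion follows since a sum of polynomials self-reciprocal with respect to the same degree is again self-reciprocal. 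No gaps.
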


The following result is a consequence of Theorem \ref{th1}.

\begin{corollary}
\label{cor1}\cite{St98} \emph{(i)} If $\alpha(G)=\alpha,s_{\alpha}%
=1,s_{\alpha-1}=\left\vert V(G)\right\vert $, and for the unique stability
system $S$ of $G$ it is true that $\left\vert N(v)\cap S\right\vert =2$ for
each $v\in V(G)-S$, then $I(G;x)$ is symmetric.

\emph{(ii)} If $G$ is a claw-free graph with $\alpha(G)=\alpha,s_{\alpha
}=1,s_{\alpha-1}=\left\vert V(G)\right\vert $, then $I(G;x)$ is symmetric.
\end{corollary}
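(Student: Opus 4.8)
To prove Corollary~\ref{cor1} I would derive both parts from Theorem~\ref{th1}: the task is to verify that, for the distinguished independent set $S$, the equality $|N(A)\cap S|=2|A|$ holds for every independent set $A\subseteq V(G)-S$. Because each $v\in V(G)-S$ satisfies $|N(v)\cap S|=2$ (this is hypothesised in~(i), and I will obtain it as the first step of~(ii)), the union bound already gives $|N(A)\cap S|\leq\sum_{v\in A}|N(v)\cap S|=2|A|$, so the whole content is the reverse inequality, which amounts to saying that the two-element sets $N(v)\cap S$ with $v\in A$ are pairwise disjoint. Hence it suffices to prove one claim: \emph{if $v_{1},v_{2}\in V(G)-S$ are non-adjacent, then $N(v_{1})\cap S$ and $N(v_{2})\cap S$ are disjoint.} Granting this, for an arbitrary independent $A\subseteq V(G)-S$ all pairwise intersections of the sets $N(v)\cap S$, $v\in A$, vanish, so $|N(A)\cap S|=2|A|$, and Theorem~\ref{th1} finishes part~(i).

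The main preliminary step will be a complete enumeration of the independent sets of size $\alpha-1$. The $\alpha$ sets $S-\{u\}$, $u\in S$, are of this kind. In addition, for each $v\in V(G)-S$ the set $T_{v}:=(S-N(v))\cup\{v\}$ is independent (all neighbours of $v$ that lie in $S$ have been deleted) and has cardinality $\alpha-|N(v)\cap S|+1=\alpha-1$; moreover $v$ is its unique vertex outside $S$, so the $T_{v}$ are pairwise distinct and distinct from every $S-\{u\}$. This exhibits $\alpha+(|V(G)|-\alpha)=|V(G)|$ pairwise distinct independent sets of size $\alpha-1$, and since $s_{\alpha-1}=|V(G)|$ these must be all of them. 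Consequently every independent set of size $\alpha-1$ is either contained in $S$ or has exactly one vertex outside $S$.

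Next I would prove the disjointness claim by contradiction. Suppose non-adjacent $v_{1},v_{2}\in V(G)-S$ have $(N(v_{1})\cap S)\cap(N(v_{2})\cap S)\neq\emptyset$, and put $R:=(N(v_{1})\cap S)\cup(N(v_{2})\cap S)\subseteq S$, so that $2\leq|R|\leq3$. The set $Q:=(S-R)\cup\{v_{1},v_{2}\}$ is independent: $S-R$ is independent, $v_{1}$ and $v_{2}$ are non-adjacent, and each $v_{i}$ has all of its $S$-neighbours inside $R$ and hence no neighbour in $S-R$. If $|R|=2$ then $|Q|=\alpha$ while $Q\neq S$, contradicting $s_{\alpha}=1$; if $|R|=3$ then $Q$ is an independent set of size $\alpha-1$ with two vertices, $v_{1}$ and $v_{2}$, lying outside $S$, which is ruled out by the enumeration above. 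Either way we reach a contradiction, so the claim holds, and with it part~(i).

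Finally, for part~(ii) the first step is to show that $|N(v)\cap S|=2$ for every $v\in V(G)-S$, where $S$ is the unique maximum independent set. We cannot have $N(v)\cap S=\emptyset$, else $S\cup\{v\}$ would be independent of size $\alpha+1$; we cannot have $|N(v)\cap S|\geq3$, else three vertices of $S$ adjacent to $v$ would be pairwise non-adjacent and would together with $v$ induce a $K_{1,3}$, contradicting claw-freeness; and we cannot have $|N(v)\cap S|=1$, say $N(v)\cap S=\{u\}$, else $(S-\{u\})\cup\{v\}$ would be a maximum independent set other than $S$, contradicting $s_{\alpha}=1$. Hence $|N(v)\cap S|=2$ throughout, part~(i) applies, and $I(G;x)$ is symmetric. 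The delicate point in the whole argument is the disjointness claim for the sets $N(v_{i})\cap S$; it is exactly there that both hypotheses $s_{\alpha}=1$ and $s_{\alpha-1}=|V(G)|$ get used, through the complete list of independent sets of cardinality $\alpha-1$.
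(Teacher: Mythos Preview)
Your proof is correct and follows the route the paper indicates: the paper does not spell out a proof of Corollary~\ref{cor1} but simply records it as ``a consequence of Theorem~\ref{th1}'' (citing \cite{St98}), and your argument does precisely that, verifying the hypothesis $|N(A)\cap S|=2|A|$ of Theorem~\ref{th1} via the enumeration of $(\alpha-1)$-independent sets and the disjointness claim. The deduction of $|N(v)\cap S|=2$ from claw-freeness in part~(ii) is also the expected one.
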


Corollary \ref{cor1} gives three different ways to construct graphs having
symmetric independence polynomials \cite{St98}.

\begin{itemize}
\item \textbf{Rule 1.} For a given graph $G$, define a new graph $H$ as:
$H=G\circ2K_{1}$. \begin{figure}[h]
\setlength{\unitlength}{1cm}\begin{picture}(5,2.2)\thicklines
\multiput(1,0.5)(1,0){5}{\circle*{0.29}}
\put(2,1.5){\circle*{0.29}}
\put(1,0.5){\line(1,0){4}}
\put(2,0.5){\line(0,1){1}}
\put(2,1.5){\line(1,-1){1}}
\put(0.3,1){\makebox(0,0){$G$}}
\multiput(7.5,0)(1,0){5}{\circle*{0.29}}
\multiput(6.5,1)(1,0){7}{\circle*{0.29}}
\multiput(6.5,2)(1,0){6}{\circle*{0.29}}
\put(6.5,1){\line(1,0){6}}
\put(6.5,2){\line(1,-1){1}}
\put(7.5,2){\line(1,0){2}}
\put(7.5,0){\line(1,1){1}}
\put(8.5,0){\line(0,1){2}}
\put(8.5,2){\line(1,-1){2}}
\put(9.5,0){\line(0,1){1}}
\put(10.5,1){\line(0,1){1}}
\put(10.5,1){\line(1,1){1}}
\put(11.5,0){\line(0,1){1}}
\put(5.8,1){\makebox(0,0){$H_{1}$}}
\end{picture}
\caption{$G$ and $H_{1}=G\circ2K_{1}$.}%
\label{fig22}%
\end{figure}
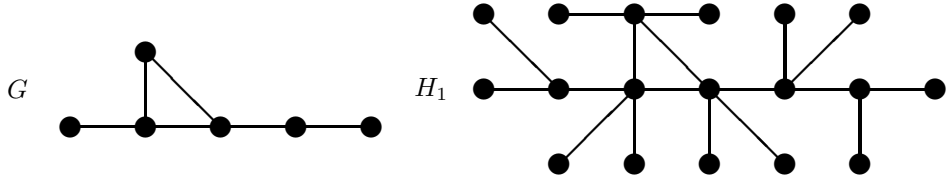
\end{itemize}

For an example, see the graphs in Figure \ref{fig22}: $I(G;x)=1+6x+9x^{2}%
+3x^{3}$, while%
\begin{align*}
I(H_{1};x)  &  =\left(  1+x\right)  ^{6}\left(  1+12x+48x^{2}+77x^{3}%
+48x^{4}+12x^{5}+x^{6}\right)  =\\
&  =1+18x+135x^{2}+565x^{3}+1485x^{4}+2601x^{5}+3126x^{6}+\\
&  +2601x^{7}+1485x^{8}+565x^{9}+135x^{10}+18x^{11}+x^{12}.
\end{align*}

\begin{itemize}
\item A \textit{cycle cover} of a graph $G$ is a spanning graph of $G$, each
connected component of which is a vertex (which we call a
\textit{vertex-cycle}), an edge (which we call an \textit{edge-cycle}), or a
proper cycle. Let $\Gamma$ be a cycle cover of $G$.

\textbf{Rule 2. }Construct a new graph $H$ from $G$, denoted by $H=\Gamma
\{G\}$, as follows: if $C\in\Gamma$ is

\emph{(i)} a vertex-cycle, say $v$, then add two vertices and join them to $v
$;

\emph{(ii)} an edge-cycle, say $uv$, then add two vertices and join them to
both $u$ and $v$;

\emph{(iii)} a proper cycle, with
\[
V(C)=\{v_{i}:1\leq i\leq s\},E(C)=\{v_{i}v_{i+1}:1\leq i\leq s-1\}\cup
\{v_{1}v_{s}\},
\]
then add $s$ vertices, say $\{w_{i}:1\leq i\leq s\}$ and each of them is
joined to two consecutive vertices on $C$, as follows: $w_{1}$ is joined to
$v_{s},v_{1}$, then $w_{2}$ is joined to $v_{1},v_{2}$, further $w_{3}$ is
joined to $v_{2},v_{3}$, etc.

Figure \ref{fig33} contains an example, namely, $I(G;x)=1+6x+9x^{2}+3x^{3}$,
while%
\begin{align*}
I(H_{2};x)  &  =1+13x+60x^{2}+125x^{3}+125x^{4}+60x^{5}+13x^{6}+x^{7}=\\
&  =\left(  1+x\right)  \left(  \allowbreak1+12x+48x^{2}+77x^{3}%
+48x^{4}+12x^{5}+x^{6}\right)  .
\end{align*}
\begin{figure}[h]
\setlength{\unitlength}{1cm}\begin{picture}(5,2)\thicklines
\multiput(2,0.5)(1,0){5}{\circle*{0.29}}
\put(3,1.5){\circle*{0.29}}
\put(2,0.5){\line(1,0){4}}
\put(3,0.5){\line(0,1){1}}
\put(3,1.5){\line(1,-1){1}}
\put(2,0.1){\makebox(0,0){$x$}}
\put(3,0.1){\makebox(0,0){$b$}}
\put(4,0.1){\makebox(0,0){$c$}}
\put(5,0.1){\makebox(0,0){$y$}}
\put(6,0.1){\makebox(0,0){$z$}}
\put(2.7,1.5){\makebox(0,0){$a$}}
\put(1.2,1){\makebox(0,0){$G$}}
\multiput(8,0)(1,0){4}{\circle*{0.29}}
\multiput(8,1)(1,0){5}{\circle*{0.29}}
\multiput(8,2)(1,0){4}{\circle*{0.29}}
\put(8,1){\line(1,0){4}}
\put(8,1){\line(1,-1){1}}
\put(8,0){\line(0,1){1}}
\put(8,2){\line(1,-1){2}}
\put(8,2){\line(1,0){2}}
\put(9,2){\line(1,-1){1}}
\put(9,1){\line(0,1){1}}
\put(10,0){\line(0,1){2}}
\put(11,0){\line(0,1){2}}
\put(11,0){\line(1,1){1}}
\put(11,2){\line(1,-1){1}}
\put(7.2,1){\makebox(0,0){$H_{2}$}}
\end{picture}
\caption{$G$ and $H_{2}=\Gamma\left(  G\right)  $, where $\Gamma=\left\{
\left\{  x\right\}  ,\left\{  a,b,c\right\}  ,\left\{  y,z\right\}  \right\}
$.}%
\label{fig33}%
\end{figure}

\item A \textit{clique cover} of a graph $G$ is a spanning graph of $G$, each
connected component of which is a clique. Let $\Phi$ be a clique cover of $G$.

\textbf{Rule 3. }Construct a new graph $H$ from $G$, denoted by $H=\Phi\{G\}
$, as follows: for each $Q\in\Phi$, add two non-adjacent vertices and join
them to all the vertices of $Q$.
\end{itemize}

Figure \ref{fig444} contains an example, namely, $I(G;x)=1+6x+9x^{2}+3x^{3}$,
while%
\[
I(H_{3};x)=1+12x+48x^{2}+77x^{3}+48x^{4}+12x^{5}+x^{6}.
\]
\begin{figure}[h]
\setlength{\unitlength}{1cm}\begin{picture}(5,2.3)\thicklines
\multiput(2,0.5)(1,0){5}{\circle*{0.29}}
\put(3,1.5){\circle*{0.29}}
\put(2,0.5){\line(1,0){4}}
\put(3,0.5){\line(0,1){1}}
\put(3,1.5){\line(1,-1){1}}
\put(2,0.1){\makebox(0,0){$x$}}
\put(3,0.1){\makebox(0,0){$b$}}
\put(4,0.1){\makebox(0,0){$c$}}
\put(5,0.1){\makebox(0,0){$y$}}
\put(6,0.1){\makebox(0,0){$z$}}
\put(2.7,1.5){\makebox(0,0){$a$}}
\put(1.2,1){\makebox(0,0){$G$}}
\multiput(8,0)(2,0){2}{\circle*{0.29}}
\multiput(8,1)(1,0){5}{\circle*{0.29}}
\multiput(8,2)(1,0){4}{\circle*{0.29}}
\put(11,0){\circle*{0.29}}
\put(8,0){\line(0,1){2}}
\put(8,1){\line(1,0){4}}
\put(9,1){\line(0,1){1}}
\put(9,1){\line(1,-1){1}}
\put(9,1){\line(1,1){1}}
\put(9,2){\line(1,0){1}}
\put(9,2){\line(1,-1){1}}
\put(9,2){\line(1,-2){1}}
\put(10,0){\line(0,1){2}}
\put(11,0){\line(0,1){2}}
\put(11,0){\line(1,1){1}}
\put(11,2){\line(1,-1){1}}
\put(7.2,1){\makebox(0,0){$H_{3}$}}
\end{picture}
\caption{$G$ and $H_{3}=\Phi\left(  G\right)  $, where $\Phi=\left\{  \left\{
x\right\}  ,\left\{  a,b,c\right\}  ,\left\{  y,z\right\}  \right\}  $.}%
\label{fig444}%
\end{figure}

\begin{theorem}
\label{th4}\cite{St98} Let $H$ be the graph obtained from a graph $G$
according to one of the \textbf{Rules 1,2} or \textbf{3}. Then $H$ has a
symmetric independence polynomial.
\end{theorem}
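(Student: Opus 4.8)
The plan is to exhibit, in each of the three cases, an independent set $S$ in $H$ that witnesses the hypothesis of Theorem \ref{th1}, namely $\left| N(A)\cap S\right| = 2\left| A\right|$ for every independent $A\subseteq V(H)-S$. In all three rules the natural candidate is the same: let $S$ be the set of \emph{all} the vertices newly adjoined to $G$ in forming $H$. In each construction these added vertices induce an edgeless graph (the two vertices added for a clique, or for a vertex-cycle or edge-cycle, are non-adjacent; the vertices $w_i$ placed along a proper cycle are pairwise non-adjacent), so $S$ is independent, and moreover $V(H)-S = V(G)$. It remains to fix an arbitrary independent set $A\subseteq V(G)$ and count its neighbours inside $S$.

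The count is carried out piece by piece. Every added vertex is joined only to vertices lying in the single component of the cover — or the single clique of the clique cover — that produced it, and these pieces partition $V(G)$; hence $S$ splits as a disjoint union $S=\bigcup_{C}S_{C}$ over the pieces $C$, with $N(A)\cap S_{C}=N(A\cap V(C))\cap S_{C}$. So it suffices to establish the local identity $\left| N(A\cap V(C))\cap S_{C}\right| =2\left| A\cap V(C)\right|$ for each piece $C$ and then sum over $C$ to get $\left| N(A)\cap S\right| =2\left| A\right|$; Theorem \ref{th1} then finishes the proof.

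For the easy pieces the local identity is immediate. If $C$ is a clique $Q$ (this covers Rule 3, and Rule 1 is the special case where the $Q$'s are singletons) then, $A$ being independent, $\left| A\cap Q\right| \le 1$; the unique vertex of $A\cap Q$, if any, is adjacent to both vertices of $S_{Q}$, contributing $2\left| A\cap Q\right|$ neighbours. The identical argument handles a vertex-cycle and an edge-cycle $uv$ (where $\left| A\cap\{u,v\}\right|\le 1$ because $uv\in E(G)$). For a proper cycle $C=v_{1}v_{2}\cdots v_{s}$ with $w_{i}$ joined to $v_{i-1},v_{i}$ (indices mod $s$), the set $A\cap V(C)$ is an independent set of the cycle, i.e. a set of pairwise non-consecutive vertices; each $v_{i}\in A$ contributes the pair $\{w_{i},w_{i+1}\}$, and one must check these pairs are pairwise disjoint. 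Two such pairs $\{w_{i},w_{i+1}\}$ and $\{w_{j},w_{j+1}\}$ overlap only if $i\equiv j$ or $i\equiv j\pm 1 \pmod{s}$, i.e. only if $v_{i},v_{j}$ coincide or are consecutive on $C$ — impossible for distinct members of an independent set of $C$. Hence $\left| N(A\cap V(C))\cap S_{C}\right| =2\left| A\cap V(C)\right|$, as required.

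I expect the proper-cycle case of Rule 2 to be the only point demanding care: one must be attentive to the cyclic (mod $s$) indexing and to the smallest case $s=3$, where ``non-consecutive in $C_{3}$'' just means ``at most one vertex'', so the claim still holds. Everything else reduces to the two observations that an independent set meets each clique of a clique cover in at most one vertex and each cycle of a cycle cover in an independent set of that cycle, together with the additivity of the neighbour count over the disjoint pieces $S_{C}$.
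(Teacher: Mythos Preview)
Your argument is correct: taking $S$ to be the set of added vertices and checking the hypothesis of Theorem~\ref{th1} piecewise over the cover works exactly as you describe, including the cyclic bookkeeping in the proper-cycle case. This is in fact Stevanovi\'{c}'s own route (Theorem~\ref{th4} is quoted from \cite{St98}, and Corollary~\ref{cor1} is precisely the pointwise version of the criterion you are invoking).

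The present paper, however, deliberately takes a different, algebraic path. It first proves symmetry for Rule~1 alone, via the corona formula of Theorem~\ref{th3}(ii): writing $I(G\circ 2K_{1};x)=(1+x)^{2n}I\bigl(G;\frac{x}{(1+x)^{2}}\bigr)$ and observing that $\frac{x}{(1+x)^{2}}$ is invariant under $x\mapsto 1/x$ forces the self-reciprocal identity $P(x)=x^{\deg P}P(1/x)$. It then establishes the divisibility relations $I(G\circ 2K_{1};x)=(1+x)^{2n-2|\Phi|}I(\Phi(G);x)$ and $I(G\circ 2K_{1};x)=(1+x)^{n-k}I(\Gamma(G);x)$ (Theorems~\ref{th2} and~\ref{th5}), and finally invokes Lemma~\ref{lem1}: a quotient of two symmetric polynomials is symmetric. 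So your combinatorial proof is shorter and entirely self-contained for the stated theorem, while the paper's detour yields strictly more --- the factorizations themselves, which are its main new contribution and which drive the headline result on building $H$ with $I(G\circ 2K_{1};x)=(1+x)^{k}I(H;x)$ for any $k\le\mu(G)$.
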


Let us remark that $I(H_{1};x)=\left(  1+x\right)  ^{6}\cdot I(H_{3};x)$ and
$I(H_{2};x)=\left(  1+x\right)  \cdot I(H_{3};x)$, where $H_{1},H_{2}$ and
$H_{3}$ are depicted in Figures \ref{fig22}, \ref{fig33}, and \ref{fig444}, respectively.

\subsection{Inequalities and equalities following from Theorem \ref{th4}}

\begin{proposition}
\cite{LevMan2008} Let $G=H\circ2K_{1}$ be with $\alpha(G)=\alpha$, and
$\left(  s_{k}\right)  $ be the coefficients of $I(G;x)$.\ Then $I(G;x)$ is
symmetric, and
\begin{align*}
s_{0}  &  \leq s_{1}\leq...\leq s_{p}\text{ \textit{for} }p=\left\lfloor
(2\alpha+2)/5\right\rfloor \text{, while }\\
s_{t}  &  \geq...\geq s_{\alpha-1}\geq s_{\alpha}\text{ for}\ t=\left\lceil
(3\alpha-2)/5\right\rceil .
\end{align*}

\end{proposition}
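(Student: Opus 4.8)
The plan is to collapse everything into one closed-form expansion of $I(G;x)$ and then read off the coefficients. Write $n=\left\vert V(H)\right\vert$ and $I(H;x)=\sum_{j=0}^{\beta}b_{j}x^{j}$ with $\beta=\alpha(H)$ and all $b_{j}\geq 0$. Since $I(2K_{1};x)=(1+x)^{2}$, Theorem \ref{th3}(ii) applied to $G=H\circ 2K_{1}$ yields $I(G;x)=(1+x)^{2n}\,I\!\left(H;\tfrac{x}{(1+x)^{2}}\right)=\sum_{j=0}^{\beta}b_{j}\,x^{j}(1+x)^{2n-2j}$. Because the unique maximum independent set of $H\circ 2K_{1}$ is the family of all $2n$ pendant vertices, $\alpha=\alpha(G)=2n$, so every exponent $2n-2j=\alpha-2j$ is non-negative. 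Each summand $b_{j}x^{j}(1+x)^{\alpha-2j}$ has non-negative coefficients, support from degree $j$ to degree $\alpha-j$, and is symmetric about $\alpha/2$; summing over $j$ therefore already re-establishes that $I(G;x)$ is symmetric.

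Next I would extract the coefficients from this expansion: $s_{k}=\sum_{j}b_{j}\binom{\alpha-2j}{k-j}$, with the usual convention that a binomial coefficient vanishes outside its range. The whole result is then governed by the sign of the consecutive differences $s_{k}-s_{k-1}=\sum_{j}b_{j}\,D_{j}(k)$, where $D_{j}(k)=\binom{\alpha-2j}{k-j}-\binom{\alpha-2j}{k-1-j}$. The elementary fact I lean on is that within a single row the binomials increase up to the midpoint, i.e. $\binom{N}{a}\geq\binom{N}{a-1}$ exactly when $a\leq (N+1)/2$. Taking $N=\alpha-2j$ and $a=k-j$, the inequality $a\leq (N+1)/2$ simplifies to $k\leq (\alpha+1)/2$, a condition free of $j$. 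Hence, whenever $k\leq (\alpha+1)/2$, every $D_{j}(k)\geq 0$, and since $b_{j}\geq 0$ this gives $s_{k}\geq s_{k-1}$.

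The only point needing care, and the main obstacle, is the boundary behaviour of the binomials: $D_{j}(k)$ can turn negative if the index $k-j$ passes the midpoint of the row $\binom{\alpha-2j}{\cdot}$ or runs past its right end (where $\binom{\alpha-2j}{k-j}=0$ while $\binom{\alpha-2j}{k-1-j}>0$). The first is ruled out by $k\leq (\alpha+1)/2$; the second is ruled out by $j\leq\beta\leq n=\alpha/2$, which gives $k-j\leq\alpha-2j$ for every $k\leq\alpha/2$, keeping the index inside its row. Thus $s_{0}\leq s_{1}\leq\cdots\leq s_{\lceil\alpha/2\rceil}$, and the palindromic identity $s_{k}=s_{\alpha-k}$ converts this into $s_{\lfloor\alpha/2\rfloor}\geq\cdots\geq s_{\alpha}$; in fact $I(G;x)$ is fully unimodal, a conclusion that subsumes the statement.

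Finally, the stated thresholds follow a fortiori from the elementary inequalities $p=\lfloor(2\alpha+2)/5\rfloor\leq\alpha/2$ and $t=\lceil(3\alpha-2)/5\rceil\geq\alpha/2$, each of which reduces after clearing denominators to $\alpha\geq 4$, with the finitely many cases $\alpha\in\{1,2,3\}$ checked directly. Consequently $s_{0}\leq\cdots\leq s_{p}$ lies inside the increasing run and $s_{t}\geq\cdots\geq s_{\alpha}$ inside the decreasing run, which is exactly what the proposition asserts.
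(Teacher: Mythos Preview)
The paper does not supply its own proof of this proposition; it is only quoted from \cite{LevMan2008}, so there is nothing in the present paper to compare your argument against.

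That said, your argument is correct, and in fact it proves more than the proposition asserts. The expansion
\[
I(G;x)=\sum_{j=0}^{\beta} b_{j}\,x^{j}(1+x)^{\alpha-2j}
\]
exhibits $I(G;x)$ as a non-negative combination of polynomials each of which is unimodal (a shifted row of Pascal's triangle) and symmetric about the common centre $\alpha/2$. Your coefficient computation shows directly that such a sum is unimodal; this is also the content of the folklore lemma, often attributed to Keilson and Gerber \cite{KeilsonGerber}, that a non-negative linear combination of symmetric unimodal sequences sharing the same centre is again symmetric and unimodal. All the auxiliary checks are in order: $\alpha(G)=2n$ so that every exponent $\alpha-2j\geq 0$; the cancellation of $j$ in the midpoint condition $k-j\leq(\alpha-2j+1)/2$; and the boundary case $k-j=\alpha-2j+1$ being excluded by $j\leq\beta\leq n=\alpha/2$ whenever $k\leq\alpha/2$.

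Thus you have in fact established full unimodality of $I(H\circ 2K_{1};x)$ for every graph $H$, and the stated partial monotonicity up to $p=\lfloor(2\alpha+2)/5\rfloor$ and down from $t=\lceil(3\alpha-2)/5\rceil$ follows \emph{a fortiori}, exactly as your last paragraph says (note that here $\alpha=2n$ is always even, so among your small cases only $\alpha=2$ is relevant, and there $p=t=1=\alpha/2$).
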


\begin{theorem}
\cite{LevMan2008} Let $H$ be a graph of order $n\geq2$, $\Gamma$ be a cycle
cover of $H$ that contains no vertex-cycles, $G$ be obtained by \textbf{Rule
2}, and $\alpha(G)=\alpha$. Then $I(G;x)$ is symmetric and its coefficients
$(s_{k})$ satisfy the subsequent inequalities:%
\begin{align*}
s_{0}  &  \leq s_{1}\leq...\leq s_{p}\text{, for}~\ p=\left\lfloor
(\alpha+1)/3\right\rfloor \text{, and }\\
s_{q}  &  \geq...\geq s_{\alpha-1}\geq s_{\alpha}\text{, for}\ ~q=\left\lceil
(2\alpha-1)/3\right\rceil .
\end{align*}

\end{theorem}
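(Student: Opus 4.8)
The plan is to write down an explicit closed form for $I(G;x)$ that does not depend on the particular cycle cover $\Gamma$, and then read everything off it. Put $n=|V(H)|$, let $W$ denote the set of all vertices adjoined to $H$ by Rule~2, so $V(G)=V(H)\cup W$, $G[V(H)]=H$, and $W$ is independent in $G$ (two vertices of $W$ are never joined — whether they come from the same member of $\Gamma$ or from different ones). Each $w\in W$ has $N_G(w)$ equal to a $2$-element subset of $V(H)$ that lies inside a single member of $\Gamma$. First I would sort the independent sets of $G$ according to their trace $B=A\cap V(H)\in\mathrm{Ind}(H)$: for a fixed $B$, the independent sets $A$ with $A\cap V(H)=B$ are exactly the sets $B\cup W'$ with $W'$ an arbitrary subset of $W(B):=\{w\in W:N_G(w)\cap B=\varnothing\}$ (the only possibly forbidden edges are those between $B$ and $W'$, since $B$ and $W$ are each independent). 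This gives
\[
I(G;x)=\sum_{B\in\mathrm{Ind}(H)}x^{|B|}(1+x)^{|W(B)|}.
\]

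The crucial step is to show $|W(B)|=n-2|B|$ for every $B\in\mathrm{Ind}(H)$, and I expect this bookkeeping — especially for proper cycles — to be the only delicate point. One checks it member by member of $\Gamma$. For an edge-cycle $uv$ with adjoined pair $a,b$ (each joined to $u$ and to $v$): as $uv\in E(H)$, the set $B$ contains at most one of $u,v$, so $0$ or $2$ of $a,b$ lie in $W(B)$ according as $B$ meets $\{u,v\}$ or not, i.e.\ the contribution is $2-2|B\cap\{u,v\}|$. For a proper cycle $C=v_1v_2\cdots v_s$ with adjoined $w_1,\dots,w_s$, where $w_i$ is joined to $v_{i-1}$ and $v_i$: $w_i\in W(B)$ iff $v_{i-1},v_i\notin B$; since $B\cap V(C)$ is independent in $C$ it contains no two consecutive $v_j$, so each $v_j\in B\cap V(C)$ excludes precisely the two adjoined vertices meeting it and these excluded pairs are pairwise disjoint — hence the contribution is $s-2|B\cap V(C)|$. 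Because $\Gamma$ has no vertex-cycles, every member $P$ contributes $|V(P)|-2|B\cap V(P)|$, and summing (using $\sum_P|V(P)|=n$ and $\sum_P|B\cap V(P)|=|B|$) yields $|W(B)|=n-2|B|$. In particular this is nonnegative, so $s_j(H)=0$ for $j>n/2$, and grouping the sum by $j=|B|$ gives the $\Gamma$-independent formula
\[
I(G;x)=\sum_{j=0}^{\lfloor n/2\rfloor}s_j(H)\,x^{j}(1+x)^{\,n-2j};
\]
equivalently $I(G;x)=(1+x)^{-n}I(H\circ 2K_1;x)$, consistent with $I(H\circ 2K_1;x)=\sum_j s_j(H)x^j(1+x)^{2(n-j)}$ (itself an instance of Theorem~\ref{th3}(ii)).

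It then remains only to extract the conclusion. Each summand $x^{j}(1+x)^{n-2j}$ is supported on the exponents $j,j+1,\dots,n-j$ with the coefficient of $x^{k}$ equal to $\binom{n-2j}{k-j}$; this sequence is palindromic about $n/2$, so $I(G;x)$ is palindromic of degree $n$ — which reproves the symmetry of Theorem~\ref{th4} and gives $\alpha=n$. For $k<n/2$ one has $\binom{n-2j}{k-j}\le\binom{n-2j}{k+1-j}$ for every $j$ (binomial row entries grow up to the middle, and both sides are $0$ once $k<j$); multiplying by $s_j(H)\ge 0$ and summing over $j$ yields $s_k(G)\le s_{k+1}(G)$. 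Hence $s_0\le s_1\le\cdots\le s_{\lfloor n/2\rfloor}$ — so $I(G;x)$ is in fact unimodal — and in particular $s_0\le s_1\le\cdots\le s_p$ for $p=\lfloor(\alpha+1)/3\rfloor$, since $(\alpha+1)/3\le\alpha/2$ when $\alpha\ge 2$. Finally $q:=\lceil(2\alpha-1)/3\rceil=\alpha-p$ (because $(2\alpha-1)/3=\alpha-(\alpha+1)/3$), so applying $s_i=s_{\alpha-i}$ to the increasing chain gives $s_q\ge s_{q+1}\ge\cdots\ge s_{\alpha-1}\ge s_\alpha$, which is the second assertion.
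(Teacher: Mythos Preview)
The paper does not actually prove this theorem: it is quoted from \cite{LevMan2008} in the preliminaries section, with no argument supplied here. So there is no ``paper's own proof'' to compare against. That said, your proof is correct and self-contained, and it is worth pointing out two things.

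First, your key identity
\[
I(\Gamma\{H\};x)=\sum_{j}s_j(H)\,x^{j}(1+x)^{n-2j}=(1+x)^{-n}\,I(H\circ 2K_1;x)
\]
is exactly the $k=0$ case of what the present paper proves later as Theorem~\ref{th5}, though you obtain it by a direct count (fibre $I(G;x)$ over $B=A\cap V(H)$ and compute $|W(B)|=n-2|B|$ member-by-member on $\Gamma$) rather than via the paper's route through Corollary~\ref{cor3} and Lemma~3.7. Your bookkeeping on proper cycles is sound: because $B$ is independent in $C$, the excluded pairs $\{w_j,w_{j+1}\}$ for $v_j\in B$ are disjoint, giving the contribution $s-2|B\cap V(C)|$.

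Second, you actually prove more than the statement asks. Your term-by-term comparison $\binom{n-2j}{k-j}\le\binom{n-2j}{k+1-j}$ for all $j$ whenever $k<n/2$ (with the boundary cases handled by vanishing binomials) yields full unimodality $s_0\le s_1\le\cdots\le s_{\lfloor \alpha/2\rfloor}$, not just the truncated chain up to $p=\lfloor(\alpha+1)/3\rfloor$. The stated theorem's weaker bounds are then an immediate specialization; your observation that $q=\alpha-p$ and the appeal to symmetry for the decreasing tail are both fine. The only cosmetic quibble is the phrase ``palindromic of degree $n$'' for $x^{j}(1+x)^{n-2j}$: as a polynomial its degree is $n-j$, but your intended meaning --- that its coefficient vector, padded to length $n+1$, is symmetric about $n/2$ --- is the correct one and is what your argument uses.
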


Let $H_{n},n\geq1$, be the graphs obtained according to \textbf{Rule 3} from
$P_{n}$, as one can see in Figure \ref{Fig44}.

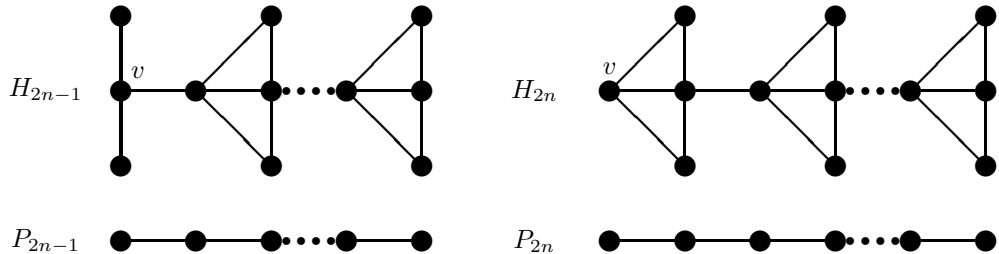
\begin{figure}[h]
\setlength{\unitlength}{1cm}\begin{picture}(5,3.2)\thicklines
\multiput(1.5,0)(1,0){5}{\circle*{0.29}}
\multiput(1.5,1)(2,0){3}{\circle*{0.29}}
\multiput(1.5,2)(1,0){5}{\circle*{0.29}}
\multiput(1.5,3)(2,0){3}{\circle*{0.29}}
\put(1.5,0){\line(1,0){2}}
\put(1.5,1){\line(0,1){2}}
\put(1.5,2){\line(1,0){2}}
\put(1.5,1){\line(0,1){2}}
\put(4.5,0){\line(1,0){1}}
\put(4.5,2){\line(1,0){1}}
\multiput(3.5,0)(0.2,0){5}{\circle*{0.10}}
\multiput(3.5,2)(0.2,0){5}{\circle*{0.10}}
\put(2.5,2){\line(1,1){1}}
\put(2.5,2){\line(1,-1){1}}
\put(3.5,1){\line(0,1){2}}
\put(4.5,2){\line(1,1){1}}
\put(4.5,2){\line(1,-1){1}}
\put(5.5,1){\line(0,1){2}}
\put(1.73,2.25){\makebox(0,0){$v$}}
\put(0.5,2){\makebox(0,0){$H_{2n-1}$}}
\put(0.5,0){\makebox(0,0){$P_{2n-1}$}}
\multiput(8,0)(1,0){6}{\circle*{0.29}}
\multiput(9,1)(2,0){3}{\circle*{0.29}}
\multiput(8,2)(1,0){6}{\circle*{0.29}}
\multiput(9,3)(2,0){3}{\circle*{0.29}}
\put(8,0){\line(1,0){3}}
\put(8,2){\line(1,1){1}}
\put(8,2){\line(1,-1){1}}
\put(9,1){\line(0,1){2}}
\put(8,2){\line(1,0){3}}
\multiput(11,0)(0.2,0){5}{\circle*{0.10}}
\multiput(11,2)(0.2,0){5}{\circle*{0.10}}
\put(12,0){\line(1,0){1}}
\put(12,2){\line(1,0){1}}
\put(10,2){\line(1,1){1}}
\put(10,2){\line(1,-1){1}}
\put(11,1){\line(0,1){2}}
\put(12,2){\line(1,1){1}}
\put(12,2){\line(1,-1){1}}
\put(13,1){\line(0,1){2}}
\put(8,2.3){\makebox(0,0){$v$}}
\put(7,2){\makebox(0,0){$H_{2n}$}}
\put(7,0){\makebox(0,0){$P_{2n}$}}
\end{picture}
\caption{$P_{n}$ and $H_{n}=\Omega\{P_{n}\}$.}%
\label{Fig44}%
\end{figure}

\begin{theorem}
\cite{LevMan2007} If $J_{n}(x)=I(H_{n};x),n\geq0$, then

\emph{(i)} $J_{0}(x)=1,J_{1}(x)=1+3x+x^{2}$ and $J_{n},n\geq2$, satisfies the
following recursive relations:
\begin{align*}
J_{2n}(x)  &  =J_{2n-1}(x)+x\cdot J_{2n-2}(x),\quad n\geq1,\\
J_{2n-1}(x)  &  =(1+x)^{2}\cdot J_{2n-2}(x)+x\cdot J_{2n-3}(x),\quad n\geq2;
\end{align*}

\emph{(ii)} $J_{n}$ is both symmetric and unimodal.
\end{theorem}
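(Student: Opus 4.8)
The plan is to obtain (i) by applying the deletion recursion of Theorem~\ref{th3}(i) at two carefully chosen vertices of $H_n$, and then to deduce (ii) from (i) together with the symmetry already provided by Theorem~\ref{th4}.

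First I would fix notation as in Figure~\ref{Fig44}: write $P_n$ on vertices $v_1,\dots,v_n$, and note that the clique cover $\Omega$ underlying Rule~3 here is the near-perfect matching $\{v_1v_2\},\{v_3v_4\},\dots$, supplemented by the vertex-clique $\{v_1\}$ (equivalently, one endpoint) when $n$ is odd. Thus, in $H_n$, each matched pair $v_{2i-1}v_{2i}$ carries two new vertices, each adjacent to both $v_{2i-1}$ and $v_{2i}$, and for odd $n$ the vertex $v_1$ carries two pendant vertices. The crucial local fact is that deleting one endpoint of a matched pair turns its two attached vertices into pendants of the surviving endpoint -- exactly the gadget Rule~3 attaches to a vertex-clique. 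Using this, I would verify the four isomorphisms
\begin{align*}
H_{2n}-v_{2n}&\cong H_{2n-1}, & H_{2n}-N[v_{2n}]&\cong H_{2n-2},\\
H_{2n-1}-v_1&\cong H_{2n-2}\cup 2K_1, & H_{2n-1}-N[v_1]&\cong H_{2n-3},
\end{align*}
where in the first two we delete an endpoint of the last matched pair, and in the last two we delete $v_1$ together with its two pendant vertices. Then Theorem~\ref{th3}(i) at $v_{2n}$ yields $J_{2n}=J_{2n-1}+x\,J_{2n-2}$, while at $v_1$ it yields $J_{2n-1}=I(H_{2n-2}\cup 2K_1;x)+x\,J_{2n-3}=(1+x)^2J_{2n-2}+x\,J_{2n-3}$, using $I(G_1\cup G_2;x)=I(G_1;x)\,I(G_2;x)$ and $I(2K_1;x)=(1+x)^2$. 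The initial values $J_0=1$ (empty graph) and $J_1=I(P_3;x)=1+3x+x^2$ are checked directly.

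For (ii), symmetry of every $J_n$ is already guaranteed by Theorem~\ref{th4}, so only unimodality remains, which I would prove by induction on $n$ using the recursions of (i). A short preliminary induction shows $\deg J_{2n}=\deg J_{2n-1}=2n$ and $\deg J_{2n-2}=2n-2$. Hence in each recursion every summand can be written as a coefficient vector of the common length $2n+1$ (for the terms $x\,J_{2n-2}$ and $x\,J_{2n-3}$ this just borders the coefficient vector of $J_{2n-2}$, resp.\ $J_{2n-3}$, by a zero at each end), and each such vector is palindromic about the central index $n$. The induction then rests on three elementary facts about nonnegative sequences: (a) a symmetric unimodal sequence may always be assumed to peak at its centre, so the termwise sum of two palindromic unimodal sequences of equal length is again palindromic and unimodal; (b) bordering a nonnegative unimodal sequence by a zero at each end leaves it unimodal; and (c) multiplying a symmetric unimodal polynomial with nonnegative coefficients by $1+x$ again gives a symmetric unimodal polynomial -- a two-line computation, applied twice for $(1+x)^2$. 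Combining (a)--(c) with the inductive hypothesis, both summands in each recursion are palindromic and unimodal with peak at $n$, and therefore so is $J_{2n}$, respectively $J_{2n-1}$.

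The conceptually easy but most error-prone part is the verification of the four graph isomorphisms: one must track how an edge-gadget degenerates into a vertex-gadget once an endpoint is deleted, and use that placing the lone vertex-clique at either end of the path gives isomorphic graphs. After that, (i) is pure substitution, and (ii) follows from the standard unimodality lemmas (a)--(c).
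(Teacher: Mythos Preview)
Your argument is correct, but there is nothing in this paper to compare it against: the theorem is quoted in the Preliminaries (Section~2.4) with attribution to \cite{LevMan2007} and no proof is given here. Your reconstruction is essentially the natural one---apply Theorem~\ref{th3}(i) at an endpoint of $P_n$, check the four local isomorphisms (which hinge, as you say, on an edge-gadget degenerating to a vertex-gadget when one endpoint is removed, and on the freedom to place the lone vertex-clique at either end of the path), and then run an induction for unimodality using the closure of symmetric unimodal nonnegative sequences under centred addition and under multiplication by $1+x$. One minor indexing slip: for odd $n$ the matched pairs are $\{v_2,v_3\},\{v_4,v_5\},\dots$ rather than $\{v_{2i-1},v_{2i}\}$, but since you work only with the endpoint vertices $v_1$ and $v_{2n}$ this does not affect the argument.
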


It was conjectured in \cite{LevMan2007} that $I(H_{n};x)$ is log-concave and
has only real roots. This conjecture has been resolved as follows.

\begin{theorem}
\cite{Wang} Let $n\geq1$. Then

\emph{(i)} the independence polynomial of $H_{n}$ is%
\[
I(H_{n};x)=\prod\limits_{s=1}^{\left\lfloor \left(  n+1\right)
/2\right\rfloor }\left(  1+4x+x^{2}+2x\cdot\cos\frac{2s\pi}{n+2}\right)  ;
\]

\emph{(ii)} $I(H_{n};x)$ has only real zeros, and, therefore, it is
log-concave and unimodal.
\end{theorem}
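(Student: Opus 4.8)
The plan is to collapse the two coupled recursions for $J_{n}(x)$ recalled above into a single three-term recurrence with coefficients independent of $n$, solve that recurrence in closed form using roots of unity, and then read off both parts of the statement. First I would decouple the recursions: substituting $J_{2n-1}=J_{2n}-xJ_{2n-2}$ and $J_{2n-3}=J_{2n-2}-xJ_{2n-4}$ into the relation $J_{2n-1}=(1+x)^{2}J_{2n-2}+xJ_{2n-3}$ yields, after simplification, $J_{2n}=(1+4x+x^{2})J_{2n-2}-x^{2}J_{2n-4}$; the mirror-image manipulation (using the relations for $J_{2n+1}$ and $J_{2n}$ to eliminate the even-indexed terms) gives $J_{2n+1}=(1+4x+x^{2})J_{2n-1}-x^{2}J_{2n-3}$. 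So \emph{both} subsequences satisfy the same recurrence
\[
a_{m}=(1+4x+x^{2})\,a_{m-1}-x^{2}\,a_{m-2},
\]
with $J_{0}=1,\ J_{2}=1+4x+x^{2}$ in the even case and $J_{1}=1+3x+x^{2},\ J_{3}=1+7x+13x^{2}+7x^{3}+x^{4}$ in the odd case.

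Next I would produce the closed form. Let $r_{+},r_{-}$ be the roots of $t^{2}-(1+4x+x^{2})t+x^{2}$, so $r_{+}+r_{-}=1+4x+x^{2}$ and $r_{+}r_{-}=x^{2}$; for $x>0$ the discriminant $(1+x)^{2}(1+6x+x^{2})$ is positive and both roots are positive reals. Set $\rho=\sqrt{r_{-}/r_{+}}\in(0,1)$, so that $r_{+}\rho^{2}=r_{-}$ and $r_{+}\rho=x$, and note the elementary identity $1+4x+x^{2}+2x\cos\theta=r_{+}(1+\rho e^{i\theta})(1+\rho e^{-i\theta})$. Combining this with $\prod_{\omega^{N}=1}(1+\rho\omega)=1-(-\rho)^{N}$ (immediate from $\prod_{\omega^{N}=1}(z-\omega)=z^{N}-1$), and observing that for $H_{2m}$ the angles $\tfrac{2s\pi}{2m+2}=\tfrac{s\pi}{m+1}$, $1\le s\le m$, are exactly the arguments of the $(2m+2)$-th roots of unity other than $\pm1$, while for $H_{2m-1}$ the angles $\tfrac{2s\pi}{2m+1}$ together with their negatives are the arguments of the $(2m+1)$-th roots of unity other than $1$, I would obtain
\[
\prod_{s=1}^{m}\Bigl(1+4x+x^{2}+2x\cos\tfrac{s\pi}{m+1}\Bigr)=\frac{r_{+}^{m}-\rho^{2}r_{-}^{m}}{1-\rho^{2}},\qquad
\prod_{s=1}^{m}\Bigl(1+4x+x^{2}+2x\cos\tfrac{2s\pi}{2m+1}\Bigr)=\frac{r_{+}^{m}+\rho r_{-}^{m}}{1+\rho}.
\]
Each right-hand side has the form $Ar_{+}^{m}+Br_{-}^{m}$, hence automatically obeys the recurrence above; checking the first two values (trivial in the even case; in the odd case the short computations with $\cos\tfrac{2\pi}{3}$ and with $\cos\tfrac{2\pi}{5},\cos\tfrac{4\pi}{5}$ give $1+3x+x^{2}$ and $1+7x+13x^{2}+7x^{3}+x^{4}$) identifies these products with $J_{2m}$ and $J_{2m-1}$ respectively. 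Since these equalities of rational expressions in $r_{\pm}$ hold for all $x>0$ and both sides are polynomials in $x$, they hold identically, which proves (i).

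For part (ii) it then suffices to factor each term: $1+4x+x^{2}+2x\cos\theta=x^{2}+(4+2\cos\theta)x+1$ has discriminant $4(1+\cos\theta)(3+\cos\theta)$, which is positive since $\theta=\tfrac{2s\pi}{n+2}$ never equals $\pi$ for $1\le s\le\lfloor(n+1)/2\rfloor$; as the roots have product $1$ and negative sum, each factor has two distinct negative real roots. Hence $I(H_{n};x)$ has only negative real zeros; in particular all its coefficients are positive, so Newton's inequalities make the coefficient sequence log-concave, and a log-concave positive sequence is unimodal.

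The step I expect to be the main obstacle is the odd case of the product evaluation: one must recognize that the angles $\tfrac{2s\pi}{2m+1}$ are \emph{not} the Chebyshev-type nodes governing $H_{2m}$ but run instead over the nontrivial $(2m+1)$-th roots of unity, and then verify that the resulting formula $\tfrac{r_{+}^{m}+\rho r_{-}^{m}}{1+\rho}$ carries the correct initial data. In particular one must be careful that the odd subsequence is not a reindexing of the even one, since $J_{1}=1+3x+x^{2}\neq 1+4x+x^{2}=J_{2}$.
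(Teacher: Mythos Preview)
The paper does not prove this theorem at all: it is quoted from \cite{Wang} in the preliminaries section as the resolution of a conjecture from \cite{LevMan2007}, and no argument is supplied. So there is nothing in the present paper to compare your proof against.

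That said, your argument is correct and self-contained. The decoupling of the two interlaced recursions into the single constant-coefficient recurrence $a_{m}=(1{+}4x{+}x^{2})a_{m-1}-x^{2}a_{m-2}$ is right (both the even and odd derivations check), and your factorisation $1{+}4x{+}x^{2}{+}2x\cos\theta=r_{+}(1{+}\rho e^{i\theta})(1{+}\rho e^{-i\theta})$ combined with $\prod_{\omega^{N}=1}(1{+}\rho\omega)=1-(-\rho)^{N}$ gives exactly the closed forms $\tfrac{r_{+}^{m}-\rho^{2}r_{-}^{m}}{1-\rho^{2}}$ and $\tfrac{r_{+}^{m}+\rho r_{-}^{m}}{1+\rho}$; these satisfy the recurrence and match $J_{0},J_{2}$ and $J_{1},J_{3}$ respectively, so (i) follows. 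For (ii), your discriminant computation $4(1{+}\cos\theta)(3{+}\cos\theta)$ is correct, and the observation $s\le\lfloor(n{+}1)/2\rfloor<(n{+}2)/2$ rules out $\theta=\pi$, so every quadratic factor has two negative real roots; Newton's inequalities then give log-concavity and hence unimodality.

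One cosmetic point: where you invoke ``Newton's inequalities'' you might add one line noting that real-rootedness with nonnegative coefficients already yields log-concavity of the $s_{k}$ themselves (not just of $s_{k}/\binom{\alpha}{k}$), since this is the form used in the paper's definitions. The concern you flag about the odd case is not a genuine obstacle: your identification of the $(2m{+}1)$-th roots of unity is exactly right and the initial-value checks you sketch are routine.
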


\section{Results}

The following lemma goes from the well-known fact that the polynomial $P(x)$
is symmetric if and only if it equals its reciprocal, i.e.,
\begin{equation}
P(x)=x^{\deg(P)}P\left(  \frac{1}{x}\right)  . \tag{*}\label{idenity}%
\end{equation}

\begin{lemma}
\label{lem1}Let $f\left(  x\right)  $, $g\left(  x\right)  $ and $h\left(
x\right)  $ be polynomials satisfying $f\left(  x\right)  =$ $g\left(
x\right)  \cdot$ $h\left(  x\right)  $. If any two of them are symmetric, then
the third is symmetric as well.
\end{lemma}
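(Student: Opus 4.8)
The plan is to use the characterization (\ref{idenity}): a polynomial $P(x)$ is symmetric if and only if $P(x) = x^{\deg(P)} P(1/x)$. The key arithmetic fact I will need is that $\deg(f) = \deg(g) + \deg(h)$, which holds because we are working over a field (no zero divisors), so the leading coefficients multiply to something nonzero.

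First I would dispose of trivial cases where one of the polynomials is the zero polynomial or a constant, so that degrees behave as expected; in fact if any factor is identically $0$ then $f\equiv 0$ and the statement is vacuous or trivial, and a nonzero constant is symmetric by convention, so I may assume all three are nonzero. Then, by symmetry of the roles of $g$ and $h$, it suffices to treat a single representative case: assume $f$ and $g$ are symmetric, and deduce that $h$ is symmetric. (The case where $g$ and $h$ are the two symmetric ones is the direct implication and is even easier: multiply the two reciprocal identities.)

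The main computation: starting from $f(x) = g(x)h(x)$, substitute $x \mapsto 1/x$ and multiply through by $x^{\deg(f)} = x^{\deg(g)+\deg(h)}$. Using $f(x) = x^{\deg(f)}f(1/x)$ and $g(x) = x^{\deg(g)}g(1/x)$, this rearranges to
\[
g(x)\,h(x) = f(x) = x^{\deg(f)}f(1/x) = \bigl(x^{\deg(g)}g(1/x)\bigr)\bigl(x^{\deg(h)}h(1/x)\bigr) = g(x)\cdot x^{\deg(h)}h(1/x).
\]
Since $g(x)$ is not the zero polynomial, we may cancel it in the polynomial ring (which is an integral domain), obtaining $h(x) = x^{\deg(h)}h(1/x)$, i.e., $h$ is symmetric.

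The only real obstacle is the bookkeeping of degrees and the justification of cancellation: one must be careful that $\deg(f) = \deg(g) + \deg(h)$ genuinely holds, which is why the nonvanishing of leading coefficients (hence the reduction to nonzero polynomials over a field) matters, and one must invoke that $k[x]$ is an integral domain to cancel $g(x)$. Beyond that the argument is a short symbolic manipulation. I would also remark that the same conclusion can be phrased coefficient-wise via a discrete convolution identity, but the reciprocal-polynomial formulation is cleanest and matches the lemma's framing.
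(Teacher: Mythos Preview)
Your argument is correct and follows precisely the route the paper indicates: the paper does not spell out a proof of Lemma~\ref{lem1} but simply notes that it follows from the reciprocal characterization~(\ref{idenity}), and your computation is exactly the natural way to make that remark explicit. The handling of degrees and the cancellation in the integral domain $k[x]$ are the right justifications, so there is nothing to add.
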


For $H=2K_{1}$, Theorem \ref{th3} gives
\[
I\left(  G\circ2K_{1};x\right)  =\left(  1+x\right)  ^{2n}\cdot I\left(
G;\frac{x}{\left(  1+x\right)  ^{2}}\right)  .
\]
Since $\frac{x}{\left(  1+x\right)  ^{2}}=\frac{\frac{1}{x}}{\left(
1+\frac{1}{x}\right)  ^{2}}$ and $\deg\left(  I\left(  G\circ2K_{1};x\right)
\right)  =2n$, one can easily see that the polynomial $I\left(  G\circ
2K_{1};x\right)  $ satisfies the identity (\ref{idenity}). Thus we conclude
with the following.

\begin{theorem}
\cite{St98} For every graph $G$, the polynomial $I\left(  G\circ
2K_{1};x\right)  $ is symmetric.
\end{theorem}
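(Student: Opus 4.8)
The plan is to derive the result directly from Theorem \ref{th3}(ii) together with the reciprocal characterization of symmetry recorded in (\ref{idenity}). First I would observe that $2K_{1}$ consists of two non-adjacent vertices, so $I(2K_{1};x)=1+2x+x^{2}=(1+x)^{2}$. Plugging $H=2K_{1}$ and $n=\left\vert V(G)\right\vert $ into Theorem \ref{th3}(ii) gives
\[
I\left(G\circ 2K_{1};x\right)=\left(1+x\right)^{2n}\cdot I\left(G;\frac{x}{\left(1+x\right)^{2}}\right),
\]
which recovers the displayed identity appearing just above the statement.

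Next I would pin down the degree of the left-hand side: since the $2n$ vertices adjoined to $G$ form an independent set to which no vertex of $G$ can be added, $\alpha\left(G\circ 2K_{1}\right)=2n$, and hence $\deg I\left(G\circ 2K_{1};x\right)=2n$. (Equivalently, writing $I(G;y)=\sum_{k}s_{k}y^{k}$ and expanding the right-hand side above as $\sum_{k}s_{k}x^{k}(1+x)^{2(n-k)}$, the $k=0$ summand contributes $x^{2n}$ with coefficient $s_{0}=1$, while every other summand has strictly smaller degree.) Then I would verify (\ref{idenity}) for $P(x)=I\left(G\circ 2K_{1};x\right)$. The one computation that matters is that the argument $\frac{x}{(1+x)^{2}}$ is fixed by the substitution $x\mapsto 1/x$, i.e. $\frac{1/x}{(1+1/x)^{2}}=\frac{x}{(1+x)^{2}}$; combining this with $x\left(1+\frac{1}{x}\right)=1+x$ yields
\[
x^{2n}P\!\left(\tfrac{1}{x}\right)=x^{2n}\left(1+\tfrac{1}{x}\right)^{2n}I\left(G;\tfrac{x}{(1+x)^{2}}\right)=\left(1+x\right)^{2n}I\left(G;\tfrac{x}{(1+x)^{2}}\right)=P(x).
\]
Since $\deg P=2n$, this is precisely identity (\ref{idenity}), so $P(x)$ is symmetric.

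I do not expect any genuine obstacle: once Theorem \ref{th3}(ii) is invoked the proof is a single substitution. The only point deserving a moment's care is confirming that $\deg I\left(G\circ 2K_{1};x\right)$ really equals $2n$, since the equivalence between symmetry of a polynomial and its being self-reciprocal relies on knowing the true degree; this is immediate from $\alpha\left(G\circ 2K_{1}\right)=2n$. Note also that Lemma \ref{lem1} does not apply verbatim here, because the second factor $I\left(G;\tfrac{x}{(1+x)^{2}}\right)$ is a rational rather than polynomial expression, so the reciprocal identity is the cleanest route.
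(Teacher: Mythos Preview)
Your proposal is correct and follows essentially the same approach as the paper: apply Theorem~\ref{th3}(ii) with $H=2K_{1}$ to obtain $I(G\circ 2K_{1};x)=(1+x)^{2n}\,I\!\left(G;\tfrac{x}{(1+x)^{2}}\right)$, observe that $\tfrac{x}{(1+x)^{2}}$ is invariant under $x\mapsto 1/x$ and that $\deg I(G\circ 2K_{1};x)=2n$, and conclude via identity~(\ref{idenity}). Your write-up is slightly more detailed (you justify the degree via $\alpha(G\circ 2K_{1})=2n$ and spell out the reciprocal computation), but the argument is the same.
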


\subsection{Clique covers}

\begin{lemma}
If $A$ is a clique in a graph $G$, then for every graph $H$%
\[
I((G,A)\circ H;x)=I\left(  H;x\right)  ^{\left\vert A\right\vert -1}\cdot
I((G,A)+H;x).
\]

\end{lemma}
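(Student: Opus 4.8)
\textbf{Plan.}

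The plan is to expand both sides by directly enumerating independent sets, using the single structural fact that a clique meets any independent set in at most one vertex. Write $A=\{a_1,\dots,a_m\}$ with $m=|A|$, and in $(G,A)\circ H$ let $H_1,\dots,H_m$ denote the $m$ copies of $H$, with $H_i$ attached to $a_i$. The core observation is that every independent set $S$ of $(G,A)\circ H$ splits as $S=S_0\cup S_1\cup\cdots\cup S_m$ with $S_0=S\cap V(G)$ independent in $G$ and each $S_i=S\cap V(H_i)$ independent in $H_i\cong H$, subject only to the local constraint that $a_i\in S_0$ forces $S_i=\varnothing$; and since $A$ is a clique, $|S_0\cap A|\le 1$.

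First I would compute $I((G,A)\circ H;x)$ by splitting on $S_0\cap A$. If $S_0\cap A=\varnothing$ the constraints never fire, so $S_0$ ranges over the independent sets of $G-A$ and each $S_i$ ranges freely over $\mathrm{Ind}(H)$, contributing $I(G-A;x)\cdot I(H;x)^m$. If $S_0\cap A=\{a_j\}$, then $S_j=\varnothing$, the other $m-1$ copies stay free, and $S_0=\{a_j\}\cup S_0'$ with $S_0'$ independent in $G-N_G[a_j]$; note that because $A$ is a clique, $A\setminus\{a_j\}\subseteq N_G(a_j)$, so $G-N_G[a_j]$ contains no vertex of $A$ and the condition $S_0'\cap A=\varnothing$ is automatic. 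This contributes $x\cdot I(G-N_G[a_j];x)\cdot I(H;x)^{m-1}$. Summing over $j$ and factoring out $I(H;x)^{m-1}$ gives
\[
I((G,A)\circ H;x)=I(H;x)^{m-1}\Bigl(I(G-A;x)\,I(H;x)+x\sum_{j=1}^{m}I(G-N_G[a_j];x)\Bigr).
\]

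Next I would run the identical case analysis on $(G,A)+H$, whose vertex set is $V(G)\cup V(H)$ with every vertex of $A$ joined to all of $V(H)$: an independent set $T=T_0\cup T_1$ with $T_0\cap A=\varnothing$ contributes $I(G-A;x)\,I(H;x)$ (here $T_1$ is free), and one with $T_0\cap A=\{a_j\}$ forces $T_1=\varnothing$ and contributes $x\cdot I(G-N_G[a_j];x)$, so that $I((G,A)+H;x)$ equals exactly the parenthesised factor above. Comparing the two formulas yields $I((G,A)\circ H;x)=I(H;x)^{|A|-1}\cdot I((G,A)+H;x)$. The only delicate point — really the one place the clique hypothesis earns its keep — is verifying in the second subcase of each computation that selecting $a_j$ both empties the $j$-th copy of $H$ (respectively all of $V(H)$) and wipes out $A\setminus\{a_j\}$ via $N_G(a_j)$, so that the two enumerations stay perfectly parallel; everything else is routine bookkeeping. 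Alternatively one could obtain the same recursions by iterating Theorem~\ref{th3}(i) at the vertices of $A$, but the direct count above is shorter.
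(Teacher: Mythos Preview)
Your argument is correct and follows essentially the same route as the paper: both proofs split the count of independent sets according to whether $S\cap A=\varnothing$ or $S\cap A=\{a_j\}$, and both use the clique hypothesis exactly once, to ensure $|S\cap A|\le 1$ and that choosing $a_j$ kills the rest of $A$. The only cosmetic difference is packaging: the paper compares $(G,A)\circ H$ to the auxiliary graph $((G,A)+H)\cup((|A|-1)H)$ via a size-preserving correspondence of independent sets, whereas you compute an explicit common formula $I(G-A;x)\,I(H;x)+x\sum_j I(G-N_G[a_j];x)$ for the bracketed factor; your version has the minor bonus of producing a closed expression for $I((G,A)+H;x)$ along the way.
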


\begin{proof}
Let $G_{1}=(G,A)\circ H$ and $G_{2}=\left(  (G,A)+H\right)  \cup\left(
(\left\vert A\right\vert -1)H\right)  $.

For $S\in$ \textrm{Ind}$(G)$, let denote the following families of independent
sets:
\begin{align*}
\Omega_{S}^{G_{1}}  &  =\{S\cup W:W\subseteq V(G_{1}-G),S\cup W\in
\text{\textrm{Ind}}(G_{1})\},\\
\Omega_{S}^{G_{2}}  &  =\{S\cup W:W\subseteq V(G_{2}-G),S\cup W\in
\mathrm{Ind}(G_{2})\}.
\end{align*}
Since $A$ is a clique, it follows that $\left\vert S\cap A\right\vert \leq1$.

\emph{Case 1}. $S\cap A=\varnothing$.

In this case $S\cup W\in\Omega_{S}^{G_{1}}$ if and only if $S\cup W\in
\Omega_{S}^{G_{2}}$. Hence, for each size $m\geq\left\vert S\right\vert $, we
get that
\[
\left\vert \{S\cup W\in\Omega_{S}^{G_{1}}:\left\vert S\cup W\right\vert
=m\}\right\vert =\left\vert \{S\cup W\in\Omega_{S}^{G_{2}}:\left\vert S\cup
W\right\vert =m\}\right\vert .
\]

\emph{Case 2}. $S\cap A=\{a\}$.

Now, every $S\cup W\in\Omega_{S}^{G_{1}}$ has $W\cap V(H)=\varnothing$ for
exactly one $H$, namely, the graph $H$ whose vertices are joined to $a$.
Hence, $W$ may contain vertices only from $\left(  |A|-1\right)  H$.

On the other hand, each $S\cup W\in\Omega_{S}^{G_{2}}$ has $W\cap
V(H)=\varnothing$ for the unique $H$\ appearing in $(G,A)+H$. Therefore, $W$
may contain vertices only from $\left(  |A|-1\right)  H$.

Hence, for each positive integer $m\geq\left\vert S\right\vert $, we obtain
that
\[
\left\vert \{S\cup W\in\Omega_{S}^{G_{1}}:\left\vert S\cup W\right\vert
=m\}\right\vert =\left\vert \{S\cup W\in\Omega_{S}^{G_{2}}:\left\vert S\cup
W\right\vert =m\}\right\vert .
\]

Consequently, one may infer that for each size, the two graphs, $G_{1}$ and
$G_{2}$, have the same number of independent sets, in other words,
$I(G_{1};x)=I(G_{2};x)$.

Since $G_{2}=\left(  (G,A)+H\right)  \cup\left(  (\left\vert A\right\vert
-1)H\right)  $ has $\left\vert A\right\vert -1$ disjoint components identical
to $H$, it follows that $I(G_{2};x)=I\left(  H;x\right)  ^{\left\vert
A\right\vert -1}\cdot I((G,A)+H;x)$.
\end{proof}

\begin{corollary}
\label{cor2}If $A$ is a clique in a graph $G$, then%
\[
I((G,A)\circ2K_{1};x)=\left(  1+x\right)  ^{2\left\vert A\right\vert -2}\cdot
I((G,A)+2K_{1};x).
\]

\end{corollary}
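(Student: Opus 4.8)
The plan is to derive Corollary~\ref{cor2} as the special case $H = 2K_1$ of the preceding lemma. Setting $H = 2K_1$ in the identity
\[
I((G,A)\circ H;x) = I(H;x)^{\left\vert A\right\vert - 1}\cdot I((G,A)+H;x),
\]
I first need the independence polynomial of $2K_1$. Since $2K_1$ is two isolated vertices, $I(2K_1;x) = (1+x)^2 = 1 + 2x + x^2$ (either directly, or via $I(G_1 \cup G_2;x) = I(G_1;x)\cdot I(G_2;x)$ applied to $K_1 \cup K_1$ with $I(K_1;x) = 1+x$). Substituting this into the exponent gives $I(2K_1;x)^{\left\vert A\right\vert - 1} = (1+x)^{2(\left\vert A\right\vert - 1)} = (1+x)^{2\left\vert A\right\vert - 2}$, and the factor $I((G,A)+H;x)$ becomes $I((G,A)+2K_1;x)$, which is exactly the claimed formula.

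The only hypothesis to check before invoking the lemma is that $A$ is a clique in $G$ — but that is precisely the hypothesis of the corollary, so it transfers verbatim. Thus the entire proof is: observe that $A$ is a clique by assumption, apply the lemma with $H = 2K_1$, and simplify $((1+x)^2)^{\left\vert A\right\vert - 1}$ to $(1+x)^{2\left\vert A\right\vert - 2}$.

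There is no real obstacle here; the work was already done in the lemma. The one point requiring a line of justification is $I(2K_1;x) = (1+x)^2$, and I would state it explicitly rather than leave it implicit, since the exponent manipulation depends on it. I would write the proof in two sentences: one recalling $I(2K_1;x) = (1+x)^2$, and one applying the lemma and collecting the power of $(1+x)$.
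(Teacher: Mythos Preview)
Your proposal is correct and matches the paper's intended derivation exactly: the corollary is stated immediately after the lemma with no separate proof, so it is meant to be read as the specialization $H=2K_{1}$, using $I(2K_{1};x)=(1+x)^{2}$ to obtain the factor $(1+x)^{2\left\vert A\right\vert -2}$. There is nothing to add.
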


\begin{theorem}
\label{th2}If $G$ is a graph of order $n$ and $\Phi$\ is a clique cover, then%
\[
I(G\circ2K_{1};x)=\left(  1+x\right)  ^{2n-2\left\vert \Phi\right\vert }\cdot
I(\Phi(G);x).
\]

\end{theorem}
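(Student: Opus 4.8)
The plan is to prove Theorem~\ref{th2} by induction on $\left\vert \Phi\right\vert$, peeling off one clique of the cover at a time and applying Corollary~\ref{cor2} together with the multiplicativity of the independence polynomial over disjoint unions. The base case $\left\vert \Phi\right\vert = 1$ is exactly Corollary~\ref{cor2} with $A = V(G)$, since then $\Phi(G) = (G,V(G)) + 2K_{1} = G + 2K_{1}$ and $2n - 2\left\vert \Phi\right\vert = 2n - 2 = 2\left\vert A\right\vert - 2$.

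For the inductive step, suppose $\Phi = \{Q_{1}, Q_{2}, \ldots, Q_{r}\}$ with $r \geq 2$. Set $A = Q_{r}$, a clique of $G$ of size $a = \left\vert A\right\vert$, and let $G'$ be the graph obtained from $G\circ 2K_{1}$ by performing \textbf{Rule 3} on the single clique $A$ only, i.e.\ adding two non-adjacent vertices joined to all of $A$, while the other coronas $2K_{1}$ attached to the remaining $n - a$ vertices are left untouched. The key observation is that $G\circ 2K_{1}$ restricted to $A$ and its coronas is precisely $(G[A], A)\circ 2K_{1}$ embedded in a larger graph; more carefully, I would work with the identity $(G, A)\circ 2K_{1}$ versus $((G,A)+2K_{1}) \cup (a-1)(2K_{1})$ from the lemma's proof. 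Applying Corollary~\ref{cor2} locally replaces the $a$ copies of $2K_{1}$ on $A$ by a single copy of $2K_{1}$ joined to all of $A$ (that is exactly the \textbf{Rule 3} gadget on $Q_{r}$) at the cost of a factor $(1+x)^{2a-2}$. The resulting graph is $(G^{*}, \Phi')\circ 2K_{1}$ where $G^{*}$ is $G$ with the \textbf{Rule 3} gadget on $Q_{r}$ already attached, and $\Phi' = \{Q_{1}, \ldots, Q_{r-1}\}$ is a clique cover of the order-$n$ induced subgraph $G$ (the gadget vertices are not covered by $\Phi'$ and receive their own $2K_{1}$ coronas). One then invokes the induction hypothesis on the pair $(G^{*}$-with-its-untouched-part, $\Phi')$ — being slightly careful that the induction is really about which cliques still carry unmerged $2K_{1}$'s — to obtain a further factor $(1+x)^{2n - 2(r-1)}$ and the graph $\Phi(G)$.

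Multiplying the two factors gives $(1+x)^{2a-2}\cdot(1+x)^{2n-2(r-1)}$, and I would check that this exponent collapses correctly: the local step should consume exponent accounting for the $a$ vertices of $Q_{r}$, so the bookkeeping must be set up so that after $r$ steps the total is $(1+x)^{2n - 2r} = (1+x)^{2n - 2\left\vert\Phi\right\vert}$. The cleanest way to make this bookkeeping airtight is to state the induction in the form: for any graph $G$ of order $n$, any clique cover $\Phi$ of $G$, and any subset $V' \subseteq V(G)$ that is a union of some members of $\Phi$, we have $I\bigl((G, V(G)\setminus V')\circ 2K_{1} \text{ combined appropriately with Rule 3 on the cliques inside } V'\bigr) = (1+x)^{\,2\left\vert V(G)\setminus V'\right\vert - 2(\text{number of cliques outside } V')} \cdot I(\text{final graph})$; peeling one clique moves it from "outside" to "inside".

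The main obstacle I anticipate is purely organizational rather than conceptual: making precise the "partial" corona graph in which some vertices of $G$ still carry their own pair $2K_{1}$ while other cliques have already been collapsed to a shared \textbf{Rule 3} gadget, and verifying that Corollary~\ref{cor2} applies verbatim to one clique at a time even when the ambient graph has these extra pendant structures. This works because Corollary~\ref{cor2} (via the Lemma) only uses that $A$ is a clique and is completely local — the argument via the families $\Omega_{S}^{G_{1}}, \Omega_{S}^{G_{2}}$ never needs the rest of the graph to have any special form — so the local replacement is legitimate regardless of what $2K_{1}$'s or gadgets sit on the other vertices. Once that locality is spelled out, the exponent arithmetic $\sum_{i=1}^{r}(2\left\vert Q_{i}\right\vert - 2) = 2n - 2r$ (using $\sum \left\vert Q_{i}\right\vert = n$ since $\Phi$ is a spanning cover with pairwise disjoint vertex sets) finishes the proof.
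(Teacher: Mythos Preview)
Your approach is essentially the same as the paper's: both iterate Corollary~\ref{cor2} one clique at a time, using the pairwise disjointness of the $Q_i$ so that each $Q_i$ remains a clique in the partially-processed graph, and then sum the exponents via $\sum_{i=1}^{r}(2\lvert Q_i\rvert-2)=2n-2r$. The paper simply writes this as ``repeating this process'' rather than as an induction, which sidesteps the need for the strengthened partial-corona induction hypothesis you correctly flagged; once you adopt that direct iterative phrasing (or your strengthened hypothesis), your argument and the paper's coincide.
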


\begin{proof}
Let $\Phi=\left\{  A_{1},A_{2},...,A_{q}\right\}  $. According to Corollary
\ref{cor2}, each

\emph{(a)} vertex-clique of $\Phi$ yields $\left(  1+x\right)  ^{2-2}=1$ as a
factor of $I(G\circ2K_{1};x)$, since a vertex defines a clique of size $1$;

\emph{(b)} edge-clique of $\Phi$ yields $\left(  1+x\right)  ^{2}$ as a factor
of $I(G\circ2K_{1};x)$, since an edge defines a clique of size $2$;
\begin{figure}[h]
\setlength{\unitlength}{1cm}\begin{picture}(5,1.3)\thicklines
\multiput(4,0)(1,0){2}{\circle*{0.29}}
\multiput(3,1)(1,0){4}{\circle*{0.29}}
\put(4,0){\line(1,0){1}}
\put(4,0){\line(0,1){1}}
\put(4,0){\line(-1,1){1}}
\put(5,0){\line(0,1){1}}
\put(5,0){\line(1,1){1}}
\put(2,0.5){\makebox(0,0){$G_{1}$}}
\multiput(9,0)(1,0){3}{\circle*{0.29}}
\multiput(9,1)(1,0){3}{\circle*{0.29}}
\put(9,0){\line(1,1){1}}
\put(9,0){\line(0,1){1}}
\put(9,0){\line(1,0){1}}
\put(9,1){\line(1,-1){1}}
\put(10,0){\line(0,1){1}}
\put(8,0.5){\makebox(0,0){$G_{2}$}}
\end{picture}
\caption{$G_{1}=K_{2}\circ2K_{1}$, $I\left(  G_{1};x\right)  =\left(
1+x\right)  ^{2}\cdot I\left(  \Phi\left(  K_{2}\right)  ;x\right)  =\left(
1+x\right)  ^{2}\cdot\left(  1+4x+x^{2}\right)  $. }%
\end{figure}

\emph{(c)} clique $A_{j}\in\Phi,\left\vert A_{j}\right\vert \geq3$, produces
$\left(  1+x\right)  ^{2\left\vert A_{j}\right\vert -2}$ as a factor of
$I(G\circ2K_{1};x)$.

Since the cliques of $\Phi$ are pairwise vertex disjoint, one can apply
Corollary \ref{cor2} to all the $q$ cliques one by one. \begin{figure}[h]
\setlength{\unitlength}{1cm}\begin{picture}(5,3.1)\thicklines
\multiput(4,0)(1,0){2}{\circle*{0.29}}
\multiput(3,1)(1,0){4}{\circle*{0.29}}
\multiput(3,2)(1,0){4}{\circle*{0.29}}
\multiput(4,3)(1,0){2}{\circle*{0.29}}
\put(4,0){\line(0,1){3}}
\put(4,1){\line(1,1){1}}
\put(4,2){\line(1,-1){1}}
\put(5,0){\line(0,1){3}}
\put(3,1){\line(1,0){3}}
\put(3,2){\line(1,0){3}}
\put(1.8,1.5){\makebox(0,0){$G_{1}$}}
\multiput(9,0.5)(1,0){4}{\circle*{0.29}}
\multiput(9,1.5)(1,0){4}{\circle*{0.29}}
\multiput(9,2.5)(1,0){4}{\circle*{0.29}}
\put(9,0.5){\line(1,0){3}}
\put(9,0.5){\line(1,1){1}}
\put(9,0.5){\line(2,1){2}}
\put(10,0.5){\line(0,1){1}}
\put(10,0.5){\line(1,1){1}}
\put(10,1.5){\line(1,0){1}}
\put(10,1.5){\line(1,-1){1}}
\put(10,1.5){\line(2,-1){2}}
\put(11,0.5){\line(0,1){1}}
\put(11,1.5){\line(1,-1){1}}
\qbezier(9,0.5)(10,-0.3)(11,0.5)
\qbezier(10,0.5)(11,-0.3)(12,0.5)
\put(7.8,1.5){\makebox(0,0){$G_{2}$}}
\end{picture}
\caption{$G_{1}=K_{4}\circ2K_{1}$, $G_{2}=6K_{1}\cup\Phi\left(  K_{4}\right)
$ and $I\left(  G_{1};x\right)  =\left(  1+x\right)  ^{6}\cdot I\left(
\Phi\left(  K_{4}\right)  ;x\right)  $.}%
\label{fig1212}%
\end{figure}

Using Corollary \ref{cor2} and the fact that $A_{1}\cap A_{2}=\emptyset$, we
have%
\begin{gather*}
I((G,A_{1}\cup A_{2})\circ2K_{1};x)=I(\left(  \left(  (G,A_{1})\circ
2K_{1}\right)  ,A_{2}\right)  \circ2K_{1};x)=\\
=\left(  1+x\right)  ^{2\left\vert A_{2}\right\vert -2}\cdot I(\left(  \left(
(G,A_{1})\circ2K_{1}\right)  ,A_{2}\right)  +2K_{1};x)\\
=\left(  1+x\right)  ^{2\left\vert A_{2}\right\vert -2}\cdot I(\left(  \left(
(G,A_{2})+2K_{1}\right)  ,A_{1}\right)  \circ2K_{1};x)\\
=\left(  1+x\right)  ^{2\left(  \left\vert A_{1}\right\vert +\left\vert
A_{2}\right\vert \right)  -2}\cdot I(\left(  \left(  (G,A_{2})+2K_{1}\right)
,A_{1}\right)  +2K_{1};x).
\end{gather*}

Repeating this process with $\left\{  A_{3},A_{4},...,A_{q}\right\}  $, and
taking into account that all the cliques of $\Phi$\ are pairwise disjoint, we
obtain%
\begin{gather*}
I((G\circ2K_{1};x)=I((G,A_{1}\cup A_{2}\cup...\cup A_{q})\circ2K_{1};x)=\\
=\left(  1+x\right)  ^{2\left(  \left\vert A_{1}\right\vert +\left\vert
A_{2}\right\vert +...+\left\vert A_{q}\right\vert \right)  -2q}\cdot I(\left(
(\left(  (G,A_{1})+2K_{1}\right)  ,A_{2}...),A_{q}\right)  +2K_{1};x)=\\
=\left(  1+x\right)  ^{2n-2\left\vert \Phi\right\vert }\cdot I(\Phi(G);x),
\end{gather*}
as required.
\end{proof}

Lemma \ref{lem1} and Theorem \ref{th2} imply the following.

\begin{corollary}
\cite{St98} For every clique cover $\Phi$ of a graph $G$, the polynomial
$I(\Phi(G);x)$ is symmetric.
\end{corollary}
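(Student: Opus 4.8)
The plan is to derive this corollary as an immediate consequence of Theorem \ref{th2} together with Lemma \ref{lem1}. Theorem \ref{th2} gives us the factorization
\[
I(G\circ2K_{1};x)=\left(1+x\right)^{2n-2\left\vert\Phi\right\vert}\cdot I(\Phi(G);x),
\]
so the strategy is to invoke the product rule for symmetry: in this identity we may take $f(x)=I(G\circ2K_{1};x)$, $g(x)=\left(1+x\right)^{2n-2\left\vert\Phi\right\vert}$, and $h(x)=I(\Phi(G);x)$. By Lemma \ref{lem1}, if any two of these three polynomials are symmetric, the third must be symmetric as well, so it suffices to establish symmetry of $f$ and $g$.

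First I would note that $g(x)=\left(1+x\right)^{2n-2\left\vert\Phi\right\vert}$ is symmetric: the exponent $2n-2\left\vert\Phi\right\vert$ is a non-negative integer (each clique in $\Phi$ has at least one vertex, so $\left\vert\Phi\right\vert\leq n$), and every power of $1+x$ is a palindromic polynomial since its coefficient sequence consists of binomial coefficients $\binom{m}{j}=\binom{m}{m-j}$. Second, I would observe that $f(x)=I(G\circ2K_{1};x)$ is symmetric — this is exactly the content of the theorem attributed to Stevanovi\'{c} \cite{St98} that appears just before the subsection, proved in the excerpt via the reciprocal-polynomial identity (\ref{idenity}) applied to the corona formula from Theorem \ref{th3}(ii). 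With $f$ and $g$ both symmetric, Lemma \ref{lem1} forces $h(x)=I(\Phi(G);x)$ to be symmetric, which is the claim.

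I do not anticipate any real obstacle here, since all the ingredients are already assembled in the preceding text; the only point requiring a word of care is the sign/parity of the exponent $2n-2\left\vert\Phi\right\vert$, to be sure that $g$ is genuinely a polynomial (rather than a rational function) so that Lemma \ref{lem1} applies verbatim — and this is handled by the bound $\left\vert\Phi\right\vert\leq n$ noted above. Thus the proof is essentially a one-line deduction: apply Lemma \ref{lem1} to the factorization of Theorem \ref{th2}, using that both $I(G\circ2K_{1};x)$ and $\left(1+x\right)^{2n-2\left\vert\Phi\right\vert}$ are symmetric.
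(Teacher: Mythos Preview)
Your proposal is correct and follows exactly the approach indicated in the paper, which simply states that the corollary is implied by Lemma \ref{lem1} and Theorem \ref{th2}. You have spelled out the details precisely as intended: factor via Theorem \ref{th2}, note that both $I(G\circ 2K_{1};x)$ and $(1+x)^{2n-2|\Phi|}$ are symmetric, and apply Lemma \ref{lem1}.
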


\subsection{Cycle covers}

\begin{lemma}
If $C$ is a proper cycle in a graph $G$, then for every graph $H$%
\[
I((G,C)\circ2H;x)=I\left(  H;x\right)  ^{\left\vert C\right\vert }\cdot
I((G,C)\bigtriangleup H;x).
\]

\end{lemma}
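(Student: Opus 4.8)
The plan is to mimic the proof of the clique‑cover lemma above, identifying $I((G,C)\circ 2H;x)$ with the independence polynomial of an auxiliary graph that visibly carries $|C|$ disjoint copies of $H$. Put $s=|C|$ and write $V(C)=\{v_{1},\dots,v_{s}\}$, with the edges of $C$ being $v_{i-1}v_{i}$ for $1\le i\le s$ (indices modulo $s$, so $v_{0}=v_{s}$). Set
\[
G_{1}=(G,C)\circ 2H,\qquad G_{2}=((G,C)\bigtriangleup H)\cup(sH).
\]
I will prove $I(G_{1};x)=I(G_{2};x)$; since $G_{2}$ is the disjoint union of $(G,C)\bigtriangleup H$ and $s$ copies of $H$, this gives $I(G_{2};x)=I(H;x)^{s}\cdot I((G,C)\bigtriangleup H;x)$, which is the asserted identity.

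First I would record the common structure of $G_{1}$ and $G_{2}$: in each of them the subgraph spanned by $V(G)$ is exactly $G$, the graph $G_{i}-G$ is a disjoint union of $2s$ copies of $H$, and every such copy is joined completely to a single vertex of $C$ (the two private copies sitting on $v_{i}$ in $G_{1}$), or to both endpoints of one edge of $C$ (the copies of $(G,C)\bigtriangleup H$), or to no vertex of $G$ at all (the free copies forming $sH$ in $G_{2}$). Then I would write each member of $\mathrm{Ind}(G_{i})$ uniquely as $S\cup W$ with $S\subseteq V(G)$ and $W\subseteq V(G_{i}-G)$, note that necessarily $S\in\mathrm{Ind}(G)$, and group by $S$. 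For a fixed $S$, the set $S\cup W$ is independent precisely when $W$ meets no copy of $H$ joined to a vertex of $S$ and restricts to an independent set on each of the remaining copies; since the copies are pairwise non‑adjacent and independence polynomials multiply over disjoint unions, the total contribution of the sets with trace $S$ to $I(G_{i};x)$ is $x^{|S|}\cdot I(H;x)^{\ell_{i}(S)}$, where $\ell_{i}(S)$ is the number of copies of $H$ in $G_{i}-G$ joined to no vertex of $S$.

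The heart of the matter is the equality $\ell_{1}(S)=\ell_{2}(S)$ for every $S\in\mathrm{Ind}(G)$. Let $T=\{\,i:v_{i}\in S\,\}\subseteq\{1,\dots,s\}$. Because $C$ is a cycle of $G$ and $S$ is independent, $T$ has no two cyclically consecutive elements, whence $T$ and its cyclic shift $T+1=\{\,i+1:i\in T\,\}$ are disjoint and $|T\cup(T+1)|=2|T|$ (in particular $2|T|\le s$). In $G_{1}$ the two copies on $v_{i}$ are both dead when $i\in T$ and both alive otherwise, so $\ell_{1}(S)=2(s-|T|)$. In $G_{2}$ the $s$ free copies are always alive, and the copy attached to the edge $v_{i-1}v_{i}$ is dead iff $i-1\in T$ or $i\in T$, i.e.\ iff $i\in T\cup(T+1)$; hence $s-2|T|$ of those copies survive and $\ell_{2}(S)=s+(s-2|T|)=2(s-|T|)$. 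Thus the class generating functions agree term by term, summation over $S\in\mathrm{Ind}(G)$ yields $I(G_{1};x)=I(G_{2};x)$, and the lemma follows.

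I expect the only real obstacle to be this cyclic bookkeeping: correctly tracking which copies of $H$ survive for a given $S$, and invoking the independence of $S$ on the cycle $C$ to see that $T$ and $T+1$ do not overlap — which is exactly what makes the two counts of surviving copies coincide (and keeps $s-2|T|$ non‑negative). Everything else — the reduction to a class‑by‑class count keyed by the trace on $V(G)$, the multiplicativity over the disjoint copies of $H$, and peeling off the factor $I(H;x)^{|C|}$ — is routine and runs parallel to the clique‑cover lemma.
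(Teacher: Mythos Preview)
Your proof is correct and follows essentially the same approach as the paper: set $G_{1}=(G,C)\circ 2H$ and $G_{2}=((G,C)\bigtriangleup H)\cup(|C|\,H)$, partition $\mathrm{Ind}(G_{i})$ by the trace $S$ on $V(G)$, and show the extensions match for each $S$. Your explicit cyclic bookkeeping (the $T$ versus $T+1$ argument showing $\ell_{1}(S)=\ell_{2}(S)=2(s-|T|)$) is in fact a cleaner justification of what the paper's Case~2 asserts somewhat tersely.
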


\begin{proof}
Let $C=\left(  V\left(  C\right)  ,E\left(  C\right)  \right)  $,
$q=\left\vert V\left(  C\right)  \right\vert $, $G_{1}=(G,C)\circ2H$, and
$G_{2}=\left(  (G,C)\bigtriangleup H\right)  \cup\left(  qH\right)  $.

For an independent set $S\subset V(G)$, let us denote:%
\begin{align*}
\Omega_{S}^{G_{1}}  &  =\{S\cup W:W\subseteq V(G_{1})-V(G),S\cup
W\in\text{\textrm{Ind}}(G_{1})\}\text{ and }\\
\Omega_{S}^{G_{2}}  &  =\{S\cup W:W\subseteq V(G_{2})-V(G),S\cup
W\in\text{\textrm{Ind}}(G_{2})\}.
\end{align*}

\emph{Case 1}. $S\cap V(C)=\emptyset$.

In this case $S\cup W\in\Omega_{S}^{G_{1}}$ if an only if $S\cup W\in
\Omega_{S}^{G_{2}}$, since $W$ is an arbitrary independent set of $2qH$.
Hence, for each size $m\geq\left\vert S\right\vert $, we get that
\[
\left\vert \{S\cup W\in\Omega_{S}^{G_{1}}:\left\vert S\cup W\right\vert
=m\}\right\vert =\left\vert \{S\cup W\in\Omega_{S}^{G_{2}}:\left\vert S\cup
W\right\vert =m\}\right\vert .
\]

\emph{Case 2}. $S\cap V(C)\neq\emptyset$.

Then, we may assert that
\[
\left\vert \Omega_{S}^{G_{1}}\right\vert =\left\vert \{S\cup W:W\text{
\textit{is an independent set in} }2(q-\left\vert S\cap V(C)\right\vert
)H\}\right\vert =\left\vert \Omega_{S}^{G_{2}}\right\vert \text{,}%
\]
since $W$ has to avoid all the "$H$-neighbors" of the vertices in $S\cap
V(C)$, both in $G_{1}$ and $G_{2}$.

Hence, for each positive integer $m\geq\left\vert S\right\vert $, we get that
\[
\left\vert \{S\cup W\in\Omega_{S}^{G_{1}}:\left\vert S\cup W\right\vert
=m\}\right\vert =\left\vert \{S\cup W\in\Omega_{S}^{G_{2}}:\left\vert S\cup
W\right\vert =m\}\right\vert .
\]
Consequently, one may infer that for each size, the two graphs, $G_{1}$ and
$G_{2}$, have the same number of independent sets. In other words,
$I(G_{1};x)=I(G_{2};x)$.

Since $G_{2}$ has $\left\vert C\right\vert $ disjoint components identical to
$H$, it follows that $I(G_{2};x)=\left(  1+x\right)  ^{\left\vert C\right\vert
}\cdot I((G,C)\bigtriangleup H;x)$.
\end{proof}

\begin{corollary}
\label{cor3}If $C$ is a\ proper cycle in a graph $G$, then%
\[
I((G,C)\circ2K_{1};x)=\left(  1+x\right)  ^{\left\vert C\right\vert }\cdot
I((G,C)\bigtriangleup K_{1};x).
\]

\end{corollary}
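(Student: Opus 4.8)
The plan is to specialize the preceding Lemma to the case $H = K_{1}$, exactly as Corollary~\ref{cor2} specializes its own Lemma for clique covers. So I would start from the identity
\[
I((G,C)\circ 2H;x) = I(H;x)^{\left\vert C\right\vert}\cdot I((G,C)\bigtriangleup H;x)
\]
just proved, and substitute $H = K_{1}$. Then $I(K_{1};x) = 1+x$, and the corona $G\circ 2K_{1}$ restricted to the vertices of $C$ is precisely $(G,C)\circ 2K_{1}$, i.e.\ the phrase ``$(G,C)\circ 2H$ with $H=K_{1}$'' is literally $(G,C)\circ 2K_{1}$ by the definition of corona given in the Introduction. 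Since $|C| = |V(C)| = q$, the factor $I(H;x)^{|C|}$ becomes $(1+x)^{|C|}$, and the right-hand side becomes $(1+x)^{|C|}\cdot I((G,C)\bigtriangleup K_{1};x)$, which is exactly the claimed equation.

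The only point that needs a word of care is making sure the two notational conventions line up: the Lemma is stated for ``$(G,C)\circ 2H$'', where attaching two copies of $H$ to each of a pair of consecutive vertices along $C$ is the construction underlying $\bigtriangleup$, whereas the corona notation $(G,A)\circ H$ from the Introduction attaches one copy of $H$ per vertex of $A$. In the Lemma's proof, ``$G_{1} = (G,C)\circ 2H$'' is understood as attaching $2$ copies of $H$ to \emph{each} vertex of $V(C)$, so with $H = K_{1}$ this attaches two new pendant-type vertices to each vertex of $C$, matching the definition of $G\circ 2K_{1}$ along $C$. I would simply note this identification explicitly and invoke the Lemma.

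There is essentially no obstacle here: the corollary is a direct substitution $H:=K_{1}$ into the immediately preceding Lemma, using only $I(K_{1};x)=1+x$ and $|C|=|V(C)|$. The ``hard part'', such as it is, was already carried out in the Lemma itself (the bijective counting of independent sets in the two auxiliary graphs $G_{1}$ and $G_{2}$); the corollary just records the consequence in the form that will be needed for cycle covers, parallel to how Corollary~\ref{cor2} feeds into Theorem~\ref{th2}.

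\begin{proof}
Apply the preceding Lemma with $H = K_{1}$. Since $I(K_{1};x) = 1 + x$ and $\left\vert C\right\vert = \left\vert V(C)\right\vert$, the factor $I(H;x)^{\left\vert C\right\vert}$ equals $(1+x)^{\left\vert C\right\vert}$, and $(G,C)\circ 2K_{1}$ is the subgraph of $G\circ 2K_{1}$ consisting of $G$ together with the two new vertices attached to each vertex of $V(C)$. Hence
\[
I((G,C)\circ 2K_{1};x) = \left(  1+x\right)  ^{\left\vert C\right\vert}\cdot I((G,C)\bigtriangleup K_{1};x),
\]
as claimed.
\end{proof}
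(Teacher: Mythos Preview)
Your proposal is correct and matches the paper's approach exactly: the corollary is stated without proof in the paper, as it follows immediately from the preceding Lemma by the substitution $H=K_{1}$ and the observation $I(K_{1};x)=1+x$. Your extra paragraph reconciling the corona notation $(G,C)\circ 2H$ with the Introduction's definition is a helpful sanity check but not strictly needed.
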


\begin{theorem}
\label{th5}If $G$ is a graph of order $n$ and $\Gamma$\ is a cycle cover
containing $k$ vertex-cycles, then $I(G\circ2K_{1};x)$ satisfies%
\[
I(G\circ2K_{1};x)=\left(  1+x\right)  ^{n-k}\cdot I(\Gamma(G);x).
\]

\end{theorem}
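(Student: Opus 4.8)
The plan is to mimic the structure of the proof of Theorem~\ref{th2}, replacing cliques by proper cycles and edge-cycles, and using Corollary~\ref{cor3} as the analogue of Corollary~\ref{cor2}. Write $\Gamma=\{C_{1},C_{2},\dots,C_{m}\}$, where the components are of three kinds: $k$ vertex-cycles, some edge-cycles, and some proper cycles. The first observation is that each of the three kinds contributes a well-understood factor when we pass from $G\circ 2K_{1}$ to the graph built by applying Stevanovi\'c's Rule~2 locally to that component. A vertex-cycle $\{v\}$ contributes a factor $(1+x)^{1}$: indeed, applying Rule~2(i) adds two pendant-type vertices to $v$, and the corona at $v$ (two vertices joined only to $v$) versus the Rule~2 gadget differ exactly as in Stevanovi\'c's analysis; more directly, one can treat a single vertex as a clique of size $1$ and use Corollary~\ref{cor2}, which gives $(1+x)^{0}$ there, but here the cycle-cover accounting is different and a vertex-cycle must be handled separately — this is precisely why the exponent is $n-k$ rather than $n$. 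An edge-cycle $uv$ is a clique of size $2$, and Rule~2(ii) coincides with Rule~3 applied to that edge, so Corollary~\ref{cor2} gives the factor $(1+x)^{2|A|-2}=(1+x)^{2}$. A proper cycle $C$ of length $q$ contributes $(1+x)^{q}$ by Corollary~\ref{cor3}, since $(G,C)\triangle K_{1}$ is exactly the Rule~2(iii) gadget on $C$.

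Next I would iterate, exactly as in the proof of Theorem~\ref{th2}. Because the components of $\Gamma$ are pairwise vertex-disjoint and span $V(G)$, one can peel them off one at a time: first replace the corona copies of $2K_{1}$ sitting on $C_{1}$ using the appropriate corollary, obtaining a factor together with a graph in which $C_{1}$ has been ``processed'' by Rule~2, then move to $C_{2}$, and so on. The only subtlety, as in Theorem~\ref{th2}, is the commutation step: after processing $C_{1}$, the vertices of $C_{2}$ still form a proper cycle (or an edge, or a vertex) in the modified graph with their original $2K_{1}$-coronas attached and no new adjacencies among them, so Corollary~\ref{cor3} (resp.\ Corollary~\ref{cor2}) still applies verbatim. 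Summing the exponents: vertex-cycles give $\sum 1=k$ over the $k$ of them, edge-cycles of total vertex count $2e$ give $\sum 2 = 2e/ ?$ — more cleanly, an edge-cycle on $2$ vertices yields exponent $2$, i.e.\ one per vertex; a proper cycle on $q$ vertices yields $q$, i.e.\ one per vertex; a vertex-cycle on $1$ vertex yields $1$, i.e.\ one per vertex. Hence every vertex of $G$ contributes exactly $1$ to the exponent \emph{except} that a vertex-cycle, if we had naively used the clique count, would contribute $0$; reconciling the two bookkeeping schemes, the correct total exponent is $n-k$ precisely because vertex-cliques contribute $(1+x)^{0}$ under Rule~3 whereas vertex-cycles contribute $(1+x)^{1}$ under Rule~2. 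I would state this exponent computation carefully, since that discrepancy (the ``$-k$'') is the whole content of the theorem beyond the two lemmas.

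The main obstacle I anticipate is handling the vertex-cycles cleanly, because the two lemmas proved so far (the clique lemma and the proper-cycle lemma) do not directly cover a vertex-cycle: a single vertex is neither a ``proper cycle'' nor does the clique lemma give the factor $(1+x)^{1}$ we need. I would resolve this either by a direct one-line computation — if $v$ is a vertex-cycle, then $G\circ 2K_{1}$ restricted to $v$ and its two corona vertices versus $\Gamma(G)$ restricted to $v$ and its two Rule~2(i) vertices differ only in that $\Gamma(G)$ has one extra isolated-from-$v$-perspective vertex, giving exactly a factor $(1+x)$ by the same $\Omega_{S}$-counting argument used in both lemmas above — or by proving a third small lemma: if $v$ is an isolated-looking vertex handled by Rule~2(i), then $I((G,\{v\})\circ 2K_{1};x)=(1+x)\cdot I(\Gamma_{v}(G);x)$ where $\Gamma_{v}$ adds two vertices joined to $v$. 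Finally, I would invoke Lemma~\ref{lem1} together with the already-established Theorem~\ref{th4} (Rule~2 produces a symmetric independence polynomial) and the symmetry of $I(G\circ 2K_{1};x)$ to note that $(1+x)^{n-k}$ and $I(\Gamma(G);x)$ are both symmetric, which makes the decomposition consistent and yields the decomposition advertised in the introduction with this particular $k$ (the number of vertex-cycles, which can be any integer in $\{0,1,\dots,\mu(G)\}$ by choosing a cycle cover whose non-vertex part is a matching of the desired size together with isolated vertices).
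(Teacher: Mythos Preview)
Your overall strategy---iterate over the pairwise disjoint components of $\Gamma$, applying Corollary~\ref{cor2} to edge-cycles and Corollary~\ref{cor3} to proper cycles, exactly as in the proof of Theorem~\ref{th2}---is the paper's approach. The commutation step is handled the same way.

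However, your treatment of vertex-cycles is wrong, and this is where the exponent $n-k$ actually comes from. Rule~2(i) says: for a vertex-cycle $v$, add two new vertices and join each to $v$. That is \emph{identical} to the corona $(\{v\})\circ 2K_{1}$. So locally at a vertex-cycle, $G\circ 2K_{1}$ and $\Gamma(G)$ coincide, and the factor contributed is $(1+x)^{0}=1$, not $(1+x)^{1}$. Your proposed ``third small lemma'' $I((G,\{v\})\circ 2K_{1};x)=(1+x)\cdot I(\Gamma_{v}(G);x)$ is therefore false: the two sides are equal as graphs. The paper simply notes that a vertex is a clique of size $1$, so Corollary~\ref{cor2} gives $(1+x)^{2\cdot 1-2}=1$.

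With the correct accounting---vertex-cycles contribute exponent $0$, edge-cycles contribute $2$, proper cycles contribute $|C|$---the total exponent is the number of vertices lying in non-vertex components of $\Gamma$, which is exactly $n-k$. Your paragraph attempting to ``reconcile two bookkeeping schemes'' is tangled because it is based on the incorrect premise that every vertex contributes $1$; once you drop that, the computation is a one-liner and no extra lemma is needed.
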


\begin{proof}
According to Corollaries \ref{cor2} and \ref{cor3}, each

\emph{(a)} vertex-cycle of $\Gamma$ yields $\left(  1+x\right)  ^{2-2}=1$ as a
factor of $I(G\circ2K_{1};x)$, since a vertex defines a clique of size $1$;

\emph{(b)} edge-cycle of $\Gamma$ yields $\left(  1+x\right)  ^{2}$ as a
factor of $I(G\circ2K_{1};x)$, since an edge defines a clique of size $2$;
\begin{figure}[h]
\setlength{\unitlength}{1cm}\begin{picture}(5,3.1)\thicklines
\multiput(4,0)(1,0){2}{\circle*{0.29}}
\multiput(3,1)(1,0){4}{\circle*{0.29}}
\multiput(3,2)(1,0){4}{\circle*{0.29}}
\multiput(4,3)(1,0){2}{\circle*{0.29}}
\put(4,0){\line(0,1){3}}
\put(5,0){\line(0,1){3}}
\put(3,1){\line(1,0){3}}
\put(3,2){\line(1,0){3}}
\put(2,1.5){\makebox(0,0){$G_{1}$}}
\multiput(9,0)(1,0){2}{\circle*{0.29}}
\multiput(8,1)(1,0){4}{\circle*{0.29}}
\multiput(8,2)(1,0){4}{\circle*{0.29}}
\multiput(9,3)(1,0){2}{\circle*{0.29}}
\put(8,2){\line(1,0){2}}
\put(9,2){\line(1,1){1}}
\put(9,0){\line(1,1){1}}
\put(9,0){\line(0,1){2}}
\put(9,1){\line(1,0){2}}
\put(9,1){\line(-1,1){1}}
\put(10,2){\line(1,-1){1}}
\put(10,1){\line(0,1){2}}
\put(7,1.5){\makebox(0,0){$G_{2}$}}
\end{picture}
\caption{$G_{1}=C_{4}\circ2K_{1}$, $G_{2}=4K_{1}\cup\Gamma\left(
C_{4}\right)  $ and $I\left(  G_{1};x\right)  =\left(  1+x\right)  ^{4}\cdot
I\left(  \Gamma\left(  C_{4}\right)  ;x\right)  $}%
\label{fig121}%
\end{figure}

\emph{(c)} proper cycle $C\in\Gamma$ produces $\left(  1+x\right)
^{\left\vert C\right\vert }$ as a factor.

Let $\Gamma=\left\{  C_{j}:1\leq j\leq q\right\}  \cup\left\{  v_{i}:1\leq
i\leq k\right\}  $ be a cycle cover containing $k$ vertex-cycles, namely,
$\left\{  v_{i}:1\leq i\leq k\right\}  $.

Using Corollary \ref{cor3} and the fact that $C_{1}\cap C_{2}=\emptyset$, we
have%
\begin{gather*}
I((G,C_{1}\cup C_{2})\circ2K_{1};x)=I(\left(  \left(  (G,C_{1})\circ
2K_{1}\right)  ,C_{2}\right)  \circ2K_{1};x)=\\
=\left(  1+x\right)  ^{\left\vert C_{2}\right\vert }\cdot I(\left(  \left(
(G,C_{1})\circ2K_{1}\right)  ,C_{2}\right)  \bigtriangleup K_{1};x)\\
=\left(  1+x\right)  ^{\left\vert C_{2}\right\vert }\cdot I(\left(  \left(
(G,C_{2})\bigtriangleup K_{1}\right)  ,C_{1}\right)  \circ2K_{1};x)\\
=\left(  1+x\right)  ^{\left\vert C_{1}\right\vert +\left\vert C_{2}%
\right\vert }\cdot I(\left(  \left(  (G,C_{2})\bigtriangleup K_{1}\right)
,C_{1}\right)  \bigtriangleup K_{1};x).
\end{gather*}

Repeating this process with $\left\{  C_{3},C_{4},...,C_{q}\right\}  $, and
taking into account that all the cycles of $\Gamma$\ are pairwise vertex
disjoint, we obtain%
\begin{gather*}
I((G\circ2K_{1};x)=I((G,C_{1}\cup C_{2}\cup...\cup C_{q})\circ2K_{1};x)=\\
=\left(  1+x\right)  ^{\left\vert C_{1}\right\vert +\left\vert C_{2}%
\right\vert +...+\left\vert C_{q}\right\vert }\cdot I(\left(  (\left(
(G,C_{1})\bigtriangleup K_{1}\right)  ,C_{2}...),C_{q}\right)  \bigtriangleup
K_{1};x)=\\
=\left(  1+x\right)  ^{n-k}\cdot I(\Gamma(G);x),
\end{gather*}
as claimed.
\end{proof}

Lemma \ref{lem1} and Theorem \ref{th5} imply the following.

\begin{corollary}
\cite{St98} For every cycle cover $\Gamma$ of a graph $G$, the polynomial
$I(\Gamma(G);x)$ is symmetric.
\end{corollary}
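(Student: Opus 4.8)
The plan is to mirror the structure already used for clique covers in Theorem~\ref{th2}, replacing Corollary~\ref{cor2} by Corollary~\ref{cor3} and keeping track of the vertex-cycles separately. First I would fix notation: write the cycle cover as $\Gamma=\left\{C_{1},\dots,C_{q}\right\}\cup\left\{v_{1},\dots,v_{k}\right\}$, where the $C_{j}$ are the components that are edges or proper cycles and the $v_{i}$ are the $k$ vertex-cycles. Since a vertex-cycle defines a clique of size $1$, applying Corollary~\ref{cor2} to each $v_{i}$ contributes a factor $(1+x)^{2-2}=1$; an edge-cycle defines a clique of size $2$, so Corollary~\ref{cor2} gives a factor $(1+x)^{2}$; and a proper cycle $C$ gives, by Corollary~\ref{cor3}, a factor $(1+x)^{\left\vert C\right\vert}$. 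In each of these three cases the exponent equals $\left\vert C\right\vert$ when $C$ is an edge-cycle or proper cycle, and equals $0=\left\vert v_{i}\right\vert-1$ for a vertex-cycle; the sum of all these exponents is $\sum_{j=1}^{q}\left\vert C_{j}\right\vert=n-k$, which is exactly the exponent claimed.

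The core of the argument is the inductive peeling of the components, carried out verbatim as in the proof of Theorem~\ref{th2}. The key observation making the induction work is that the corona operations with respect to disjoint vertex sets commute, so that $(G,C_{1}\cup C_{2})\circ 2K_{1}=\left(\left((G,C_{1})\circ 2K_{1}\right),C_{2}\right)\circ 2K_{1}$, and that the copies of $2K_{1}$ attached during the processing of $C_{1}$ do not touch the vertices of $C_{2}$, so applying Corollary~\ref{cor3} to $C_{2}$ inside $(G,C_{1})\circ 2K_{1}$ has the same effect as applying it inside $G$. I would display this step once for the pair $C_{1},C_{2}$:
\begin{gather*}
I((G,C_{1}\cup C_{2})\circ 2K_{1};x)=\left(1+x\right)^{\left\vert C_{2}\right\vert}\cdot I(\left(\left((G,C_{1})\circ 2K_{1}\right),C_{2}\right)\bigtriangleup K_{1};x)\\
=\left(1+x\right)^{\left\vert C_{1}\right\vert+\left\vert C_{2}\right\vert}\cdot I(\left(\left((G,C_{2})\bigtriangleup K_{1}\right),C_{1}\right)\bigtriangleup K_{1};x),
\end{gather*}
then iterate over $C_{3},\dots,C_{q}$, and finally note that the leftover vertex-cycles $v_{1},\dots,v_{k}$ contribute nothing (each a trivial factor $1$), so that what remains after all components have been processed is precisely $\Gamma(G)$ by the definition of \textbf{Rule~2}. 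This yields $I(G\circ 2K_{1};x)=(1+x)^{n-k}\cdot I(\Gamma(G);x)$.

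The main obstacle is bookkeeping rather than mathematics: one must be careful that the graph obtained by sequentially attaching, for each $C_{j}$, the appropriate gadget (two common neighbors for an edge-cycle, and the cyclically arranged new vertices $w_{1},\dots,w_{s}$ for a proper cycle on $s$ vertices) literally coincides with the graph $\Gamma(G)$ defined by \textbf{Rule~2}, including that for a proper cycle the operation $(G,C)\bigtriangleup K_{1}$ attaches one new vertex to each consecutive pair on $C$, matching clause (iii) of Rule~2. One should also make explicit that passing from $(G,C)\circ 2K_{1}$ to $(G,C)\bigtriangleup K_{1}$ via Corollary~\ref{cor3} is valid even after $G$ has been enlarged by earlier gadgets, which follows because $C_{j}$ is still a proper cycle in the enlarged graph (no new edges among its vertices were added) and the newly attached vertices lie outside $V(C_{j})$. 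Once these compatibility points are checked, the exponent count $\sum\left\vert C_{j}\right\vert=n-k$ is immediate, and Lemma~\ref{lem1} together with the symmetry of $I(G\circ 2K_{1};x)$ and of $(1+x)^{n-k}$ recovers the known symmetry of $I(\Gamma(G);x)$ as a corollary.
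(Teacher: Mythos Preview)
Your proposal is correct and follows essentially the same approach as the paper: you reproduce the proof of Theorem~\ref{th5} (iterative peeling of the disjoint cycle-cover components via Corollaries~\ref{cor2} and~\ref{cor3}, with vertex-, edge-, and proper cycles handled separately) to obtain $I(G\circ 2K_{1};x)=(1+x)^{n-k}\cdot I(\Gamma(G);x)$, and then invoke Lemma~\ref{lem1} together with the symmetry of $I(G\circ 2K_{1};x)$ and of $(1+x)^{n-k}$ to conclude. The paper states the Corollary as an immediate consequence of Lemma~\ref{lem1} and Theorem~\ref{th5}, so your write-up effectively combines the two into one argument with no substantive deviation.
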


\section{Conclusions\qquad}

In this paper we have given algebraic proofs for the assertions in Theorem
\ref{th4}, due to Stevanovi\'{c} \cite{St98}. In addition, we have showed that
for every clique cover $\Phi$, and every cycle cover $\Gamma$\ of a graph $G$,
the polynomial $I(G\circ2K_{1};x)$ is divisible both by $I(\Phi(G);x)$ and
$I(\Gamma(G);x)$.

\begin{figure}[h]
\setlength{\unitlength}{1cm}\begin{picture}(5,6.3)\thicklines
\multiput(1,2)(0,1){5}{\circle*{0.29}}
\put(2,2){\circle*{0.29}}
\put(1,2){\line(0,1){4}}
\put(1,3){\line(1,-1){1}}
\put(1,2){\line(1,0){1}}
\put(1,1.7){\makebox(0,0){$a$}}
\put(2,1.7){\makebox(0,0){$b$}}
\put(1.3,3){\makebox(0,0){$c$}}
\put(1.3,4){\makebox(0,0){$x$}}
\put(1.3,5){\makebox(0,0){$y$}}
\put(1.3,6){\makebox(0,0){$z$}}
\put(1.5,0){\makebox(0,0){$G$}}
\multiput(3.5,1)(0,1){6}{\circle*{0.29}}
\multiput(4.5,1)(0,1){6}{\circle*{0.29}}
\multiput(5.5,1)(0,1){6}{\circle*{0.29}}
\put(4.5,1){\line(0,1){5}}
\put(4.5,2){\line(1,0){1}}
\put(4.5,3){\line(1,-1){1}}
\put(3.5,1){\line(1,1){1}}
\put(3.5,2){\line(1,1){1}}
\put(3.5,3){\line(1,0){1}}
\put(3.5,4){\line(1,0){2}}
\put(3.5,5){\line(1,0){2}}
\put(3.5,6){\line(1,0){2}}
\put(5.5,1){\line(0,1){2}}
\put(4.5,0){\makebox(0,0){$H_{1}=G\circ2K_{1}$}}
\multiput(7,1)(0,1){6}{\circle*{0.29}}
\multiput(8,1)(0,1){6}{\circle*{0.29}}
\multiput(9,1)(0,1){6}{\circle*{0.29}}
\put(8,1){\line(0,1){5}}
\put(8,1){\line(1,1){1}}
\put(7,2){\line(1,0){2}}
\put(7,2){\line(1,1){1}}
\put(8,3){\line(1,0){1}}
\put(7,4){\line(1,0){2}}
\put(7,5){\line(1,0){2}}
\put(7,5){\line(1,1){1}}
\put(8,6){\line(1,-1){1}}
\put(8,3){\line(1,-1){1}}
\put(9,2){\line(0,1){1}}
\put(8,0){\makebox(0,0){$H_{2}=5K_{1}\cup\Gamma\left(G\right)$}}
\multiput(11,1)(0,1){6}{\circle*{0.29}}
\multiput(12,1)(0,1){6}{\circle*{0.29}}
\multiput(13,1)(0,1){6}{\circle*{0.29}}
\put(11,1){\line(1,1){2}}
\put(11,1){\line(1,2){1}}
\put(11,1){\line(2,1){2}}
\put(11,5){\line(1,-1){1}}
\put(11,5){\line(1,0){2}}
\put(11,6){\line(1,0){2}}
\put(12,2){\line(1,0){1}}
\put(12,3){\line(1,0){1}}
\put(12,3){\line(1,-1){1}}
\put(12,4){\line(1,1){1}}
\put(12,2){\line(0,1){4}}
\put(13,2){\line(0,1){1}}
\put(12,0){\makebox(0,0){$H_{3}=6K_{1}\cup\Phi\left(G\right)$}}
\end{picture}
\caption{$G$ with $\Gamma\left(  G\right)  =\left\{  \left\{  y,z\right\}
,\left\{  x\right\}  ,\left\{  a,b,c\right\}  \right\}  $ and $\Phi\left(
G\right)  =\left\{  \left\{  z\right\}  ,\left\{  x,y\right\}  ,\left\{
a,b,c\right\}  \right\}  $. }%
\label{fig121212}%
\end{figure}

For instance, the graphs from Figure \ref{fig121212}
have:$\ I(G;x)=1+6x+9x^{2}+2x^{3}$, while%

\begin{align*}
I(G\circ2K_{1};x)  &  =\left(  1+x\right)  ^{6}\left(  1+12x+48x^{2}%
+76x^{3}+48x^{4}+12x^{5}+x^{6}\right)  =\\
&  =\left(  1+x\right)  ^{5}\cdot I(\Gamma(G);x)=\left(  1+x\right)  ^{6}\cdot
I(\Phi(G);x),\\
I(\Gamma(G);x)  &  =1+13x+60x^{2}+124x^{3}+124x^{4}+60x^{5}+13x^{6}+x^{7},\\
I(\Phi(G);x)  &  =1+12x+48x^{2}+76x^{3}+48x^{4}+12x^{5}+x^{6}.
\end{align*}

Clearly, for every $k\leq\mu\left(  G\right)  $ there exists a clique cover
containing $k$ non-trivial cliques, namely, edges. Consequently, we obtain the following.

\begin{theorem}
For every graph $G$ and for each non-negative integer $k\leq\mu\left(
G\right)  $, one can build a graph $H$, such that: $G$ is a subgraph of $H$,
$I\left(  H;x\right)  $ is symmetric, and $I\left(  G\circ2K_{1};x\right)
=\left(  1+x\right)  ^{k}\cdot I\left(  H;x\right)  $.
\end{theorem}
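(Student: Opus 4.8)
The plan is to derive the final theorem as a direct consequence of Theorem \ref{th2} together with Lemma \ref{lem1}, by choosing the clique cover judiciously. The key observation, already noted in the remark preceding the statement, is that for any $k$ with $0 \le k \le \mu(G)$ there is a clique cover $\Phi$ of $G$ that uses exactly $k$ edges (non-trivial cliques of size $2$) and covers all remaining vertices by singleton cliques. Indeed, fix a matching $M$ in $G$ of size $k$ (which exists since $k \le \mu(G)$); let $\Phi$ consist of the $k$ edges of $M$ together with $\{v\}$ for each vertex $v$ not saturated by $M$. Then $|\Phi| = k + (n - 2k) = n - k$, so Theorem \ref{th2} gives
\[
I(G\circ 2K_1;x) = (1+x)^{2n - 2|\Phi|}\cdot I(\Phi(G);x) = (1+x)^{2n - 2(n-k)}\cdot I(\Phi(G);x) = (1+x)^{2k}\cdot I(\Phi(G);x).
\]

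Next I would reconcile the exponent $2k$ with the exponent $k$ demanded in the statement. Since each edge $uv \in M$ contributes, via Corollary \ref{cor2} with $|A| = 2$, a factor $(1+x)^{2}$, and we want only $(1+x)^{k}$ left outside, the natural move is to absorb one factor of $(1+x)$ per matching edge into the graph $H$ itself. Concretely, I would set $H$ to be $\Phi(G)$ together with $k$ extra isolated vertices — equivalently, take the clique cover $\Phi'$ that uses the $k$ matching edges but covers exactly $k - ?$ ... more cleanly: choose instead the clique cover $\Phi'$ that keeps only $\lceil k/? \rceil$ ... Actually the cleanest route is: take the graph $H := \Phi(G) \cup kK_{1}$. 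Then $I(H;x) = (1+x)^{k}\cdot I(\Phi(G);x)$, so that $I(G\circ 2K_1;x) = (1+x)^{2k} I(\Phi(G);x) = (1+x)^{k}\cdot I(H;x)$, as required. Since $G$ is a subgraph of $\Phi(G)$ by construction (the clique-cover operation only adds vertices and edges), and adding isolated vertices preserves this, $G$ is a subgraph of $H$.

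Finally I would check symmetry of $I(H;x)$. By the Corollary following Theorem \ref{th2}, $I(\Phi(G);x)$ is symmetric for every clique cover $\Phi$. The polynomial $(1+x)^{k}$ is symmetric, and a product of two symmetric polynomials is symmetric (this is the easy direction of Lemma \ref{lem1}, or can be seen directly from the reciprocal identity (\ref{idenity}): if $P$ has degree $p$ and $Q$ has degree $q$ with $P(x)=x^{p}P(1/x)$ and $Q(x)=x^{q}Q(1/x)$, then $(PQ)(x) = x^{p+q}(PQ)(1/x)$). Hence $I(H;x) = (1+x)^{k}\cdot I(\Phi(G);x)$ is symmetric. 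I do not anticipate a genuine obstacle here; the only point requiring mild care is the bookkeeping of exponents — making sure the clique cover is chosen so that Theorem \ref{th2} outputs exactly $(1+x)^{2k}$ and that the $k$ added isolated vertices shift this to the desired $(1+x)^{k}$ — and confirming that "$G$ is a subgraph of $H$" survives both the clique-cover construction and the addition of isolated vertices, both of which are immediate from the definitions.
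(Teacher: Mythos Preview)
Your approach is the same as the paper's: take a matching of size $k$ (which exists since $k\le\mu(G)$), use it together with the unmatched vertices as a clique cover $\Phi$, and invoke Theorem~\ref{th2}. The paper's argument is literally the one sentence preceding the theorem, and it stops there; you go one step further and notice that Theorem~\ref{th2} actually yields the exponent $2n-2|\Phi|=2k$, not $k$, and you repair this by setting $H=\Phi(G)\cup kK_{1}$ so that $I(H;x)=(1+x)^{k}I(\Phi(G);x)$ and hence $I(G\circ 2K_{1};x)=(1+x)^{k}I(H;x)$. That fix is correct, preserves $G\subseteq H$, and keeps $I(H;x)$ symmetric by Lemma~\ref{lem1}; in this respect your write-up is more careful than the paper's terse derivation, which does not address the factor-of-two mismatch between the exponent produced by Theorem~\ref{th2} and the exponent claimed in the statement.
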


The characterization of graphs whose independence polynomials are symmetric is
still an open problem \cite{St98}.

Let us mention that there are non-isomorphic graphs with the same independence
polynomial, symmetric or not. For instance, the graphs $G_{1}$, $G_{2}$,
$G_{3}$, $G_{4}$ presented in Figure \ref{fig633} are non-isomorphic, while%
\begin{align*}
I(G_{1};x)  &  =I(G_{2};x)=1+5x+5x^{2}\text{, and }\\
I(G_{3};x)  &  =I(G_{4};x)=1+6x+10x^{2}+6x^{3}+x^{4}.
\end{align*}

\begin{figure}[h]
\setlength{\unitlength}{1cm}\begin{picture}(5,1.2)\thicklines
\multiput(7.2,0)(1,0){3}{\circle*{0.29}}
\multiput(7.2,1)(1,0){3}{\circle*{0.29}}
\put(7.2,0){\line(1,0){2}}
\put(7.2,0){\line(0,1){1}}
\put(7.2,0){\line(1,1){1}}
\put(8.2,0){\line(1,1){1}}
\put(6.7,0.5){\makebox(0,0){$G_{3}$}}
\multiput(10.6,0)(1,0){2}{\circle*{0.29}}
\multiput(10.6,1)(1,0){2}{\circle*{0.29}}
\multiput(12.6,0)(0,1){2}{\circle*{0.29}}
\put(10.6,0){\line(1,0){1}}
\put(10.6,0){\line(0,1){1}}
\put(10.6,0){\line(1,1){1}}
\put(10.6,1){\line(1,0){1}}
\put(11.6,0){\line(0,1){1}}
\put(10.1,0.5){\makebox(0,0){$G_{4}$}}
\multiput(3.9,0)(1,0){3}{\circle*{0.29}}
\multiput(4.9,1)(1,0){2}{\circle*{0.29}}
\put(3.9,0){\line(1,0){2}}
\put(3.9,0){\line(1,1){1}}
\put(4.9,0){\line(0,1){1}}
\put(5.9,0){\line(0,1){1}}
\put(3.6,0.5){\makebox(0,0){$G_{2}$}}
\multiput(0.8,0)(1,0){3}{\circle*{0.29}}
\multiput(1.8,1)(1,0){2}{\circle*{0.29}}
\put(0.8,0){\line(1,0){2}}
\put(0.8,0){\line(1,1){1}}
\put(1.8,1){\line(1,0){1}}
\put(2.8,0){\line(0,1){1}}
\put(0.5,0.5){\makebox(0,0){$G_{1}$}}
\end{picture}
\caption{Non-isomorphic graphs.}%
\label{fig633}%
\end{figure}
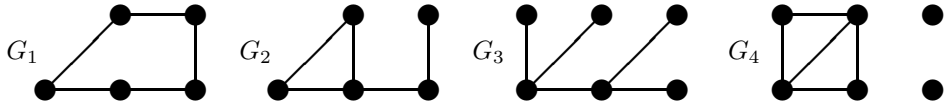

Recall that a graph having at most two vertices with the same degree is called
\textit{antiregular} \cite{Merris2001}. It is known that for every positive
integer $n\geq2$ there is a unique connected antiregular graph of order $n$,
denoted by $A_{n}$, and a unique non-connected antiregular graph of order $n$,
namely $\overline{A_{n}}$ \cite{Behzad}. In \cite{LevMan2010} we showed that
the independence polynomial of the antiregular graph $A_{n}$ is:%
\[
I(A_{2k-1};x)=\left(  1+x\right)  ^{k}+\left(  1+x\right)  ^{k-1}-1\text{ and
}I(A_{2k};x)=2\cdot\left(  1+x\right)  ^{k}-1,\quad k\geq1.
\]

Let us mention that $I(A_{2k};x)=I(K_{k,k};x)$ and $I(A_{2k-1};x)=I(K_{k,k-1}%
;x)$, where $K_{m,n}$ denotes the complete bipartite graph on $m+n$ vertices.
Notice that the coefficients of the polynomial
\[
I(A_{2k};x)=2\cdot\left(  1+x\right)  ^{k}-1=\sum\limits_{j=0}^{k}s_{j}x^{j}%
\]
satisfy $s_{j}=s_{k-j}$ for $1\leq j\leq\left\lfloor k/2\right\rfloor $, while
$s_{0}\neq s_{k}$, i.e., $I(A_{2k};x)$ is \textquotedblleft\textit{almost
symmetric}\textquotedblright.

\begin{problem}
Characterize graphs whose independence polynomials are almost symmetric.
\end{problem}

It is known that the product of a polynomial $P\left(  x\right)
=\sum\limits_{k=0}^{n}a_{k}x^{k}$ and its reciprocal $Q\left(  x\right)
=\sum\limits_{k=0}^{n}a_{n-k}x^{k}$ is a symmetric polynomial. Consequently,
if $I(G_{1};x)$ and $I(G_{2};x)$ are reciprocal polynomials, then the
independence polynomial of $G_{1}\cup G_{2}$ is symmetric, because $I\left(
G_{1}\cup G_{2};x\right)  =I(G_{1};x)\cdot I(G_{2};x)$.

\begin{problem}
Describe families of graphs whose independence polynomials are reciprocal.
\end{problem}

\end{document}